\documentclass[10pt,twocolumn,twoside]{IEEEtran}

%packages
\usepackage{todonotes}
\usepackage[ruled]{algorithm2e}

\usepackage{amsmath}
\usepackage{bm}
\usepackage{amssymb}
\usepackage{graphicx}
\usepackage{float}
\usepackage{color}
\usepackage{amsthm}
\usepackage{mathrsfs} 
\usepackage{mathtools}% Loads amsmath
\usepackage{cleveref}
\usepackage{interval}

\usepackage{subfig}
\usepackage[shortlabels]{enumitem}
\usepackage[switch]{lineno}
%\linenumbers
%commands

% argmin and argmax

\DeclareMathOperator*{\argmin}{arg\,min}

%commmands for jack's section
\newcommand{\real}{\mathbb{R}}

\newcommand{\cfont}[1]{\texttt{#1}}

\newcommand{\lree}{\bar{J}_l}

% arguments of functions e and f
%\newcommand{\argse}[1]{ \breve{x}^i_{#1}, \breve{u}^i_{#1}}
\newcommand{\argse}[1]{ {x}^i_{#1}, {u}^i_{#1}}

% notations for sets of upstream and downstream neighbours
\newcommand{\upstream}[1]{  {{#1}^i_{u}}}
\newcommand{\downstream}[1]{ {#1}^i_{d}}

% constraint sets
\newcommand{\mc}[1]{\Theta_{m \ell_1}{#1}}
\newcommand{\uncon}[1]{\Theta_u{#1}}
\newcommand{\m}[1]{\Theta_m{#1}}
\newcommand{\elltwo}[1]{\Theta_{m \ell_2}{#1}}
\newcommand{\mMatrix}{\mathbb{M}}
\newcommand{\R}{\mathbb{R}}
%graph notation
\newcommand{\Nodes}{\mathscr{V}}
\newcommand{\Edges}{\mathscr{E}}

\newcommand{\tb}[1]{#1}

\usepackage{todonotes}
\usepackage{algorithmic}
\usepackage{amsmath}
\usepackage{bm}
\usepackage{amssymb}
\usepackage{graphicx}
\usepackage{float}
\usepackage{color}
\usepackage{amsthm}
\usepackage{mathrsfs} 
\usepackage{mathtools}% Loads amsmath
\usepackage{cleveref}

\usepackage{subfig}
\newcommand{\parDiff}[2]{\frac{\partial #1}{\partial #2}}

% matrices for displayed expressions
 % no delimiters
 % parentheses as delimiters
 % square brackets as delimiters

\newtheorem{theorem}{Theorem}

\newtheorem{definition}{Definition}
\newtheorem{lemma}{Lemma}

\newtheorem{remark}{Remark}
\newtheorem{proposition}{Proposition}

%\newtheorem{theorem}{Theorem}[section]
%\newtheorem{definition}{Definition}[theorem]
%\newtheorem{lemma}[theorem]{section}
%\newtheorem{corollary}{Corollary}[section]
%\newtheorem{remark}{Remark}

%opening
\title{Distributed Identification of Contracting and/or Monotone Network Dynamics}
\author{Max Revay, Jack Umenberger, Ian R. Manchester}

\begin{document}
	
	\maketitle
	
	\begin{abstract}
%			Large-scale monotone dynamic networks appear frequently in applications such as transport networks, power scheduling, thermal systems and modeling of epidemics. 
			%		While distributed control schemes for these systems have been developed, they typically rely on the existence of a suitable dynamics model.
%			The identification of such a model from data, however, remains challenging due to the scale and feedback implied by their network structure. 
			This paper proposes methods for identification of large-scale networked systems with guarantees that the resulting model will be contracting -- a strong form of nonlinear stability -- and/or monotone, i.e. order relations between states are preserved. The main challenges that we address are: simultaneously searching for model parameters and a certificate of stability, and scalability to networks with hundreds or thousands of nodes. We propose a model set that admits convex constraints for stability and monotonicity, and has a separable structure that allows distributed identification via the alternating directions method of multipliers (ADMM). The performance and scalability of the approach is illustrated on a variety of linear and non-linear case studies, including a nonlinear traffic network with a 200-dimensional state space.
			
%			guaranteeing the distributed stability verification of large-scale networked systems and non-convexity of state-space models and their stability certificates. 
			
%ion.
%			Exploiting the properties of this model class, we develop a distributed algorithm that fits non-linear polynomial models that guarantee network stability through contraction. The model exists as information spread over the set of nodes, and the algorithm at no point requires the network structure or data to be collected at a single location, opening up the possibility for identification of extremely high dimensional non-linear models.
%			The algorithm is demonstrated on simple linear examples and the identification of simulated non-linear traffic networks.
		
	\end{abstract}

	\section{ Introduction }

	%	\tb{\begin{enumerate}
	%			\item Introduce communication graph
	%			\item Need precise problem definition
	%			\item Residual correlation analysis
	%			\item Consistency example...
	%	\end{enumerate}}
	
	System identification is the process of generating dynamic models from data \cite{ljung1999system}, and is also referred to as \textit{learning dynamical systems} (e.g. \cite{svensson2017flexible}).
	When scaling control and identification algorithms to large-scale systems, it can be useful to treat a system as a sparse network of local subsystems interconnected through a graph \cite{benner2004solving,siljak2011decentralized,van_den_hof_identification_2013}.
%	For large-scale systems, it is often useful to treated as a collection of local subsystems, interconnected through a graph \cite{benner2004solving,siljak2011decentralized}.
	In this paper, we propose algorithms for identification of such networked systems in state space form: 
	\begin{gather}
	x_{t+1} = a(x_t, u_t, u_{t+1}), \label{eq:global system}
	%	 y_t = g_\theta(x_t, u_t), \label{eq:output}
	\end{gather}
	where $x_t \in \mathbb{R}^n$ and $u_t\in \mathbb{R}^m$  are the state and input respectively, and the model dynamics $a(\cdot,\cdot,\cdot)$ can be either linear or non-linear. 
	We assume that measurements (or estimates) of state and input sequences are available.
	
	Our approach:
	\begin{enumerate}
		\item uses distributed computation (i.e. network nodes only share data and parameters with immediate neighbors),
		\item can generate models with a strong form of stability called contraction,
		\item can generate monotone models, i.e. ordering relations between states are preserved.
	\end{enumerate}

\tb{	Imposing contraction and/or monotonicity on models provides two benefits when identifying systems that are known to satisfy those properties. Firstly, incorporating prior knowledge can significantly improve the quality of the identified models.
	Secondly, it guarantees that properties of the real system that are useful for controller design are present in the identified model.
	
	This work is motivated by the observation that many systems have the combination of large-scale, sparse dynamics, monotonicity and stability. Examples include traffic networks \cite{como2015throughput,lovisari2014stability,coogan2014dynamical}, chemical reactions \cite{de2007monotone}, combination therapies \cite{rantzer2014control,hernandez2011discrete,hernandez2013optimal,jonsson2014scalable}, wildfires \cite{somers2019priority} and power scheduling \cite{rantzer2014control}.
}
%	 Secondly, it allows for the application of control methods that exploit those properties for simplified control design. Note that standard methods do not guarantee any system properties, even when the system being identified is known to satisfy those properties. }

%	The latter two are useful in contexts in which the true system is known to be contracting and/or monotone, and it is required that the identified model also has these properties. 
		
%	The main difficulty we address lies in distributed stability verification of a large system and the non-convexity of state-space models \eqref{eq:global system} and their certificates of stability.
	
	The key technical difficulty we address is the simultaneous identification and stability verification of large-scale networked systems.
	We propose a convex model set with scalable stability conditions and an algorithm based on ADMM that decomposes the identification problem into easily solvable, sub-problems that require only local communication between subsystems.
	
	\subsection{Networked System Identification}
	Standard approaches to system identification do not work well for large-scale networked systems for three reasons \cite{haber2014subspace}:
	firstly, the dataset must be collected at a central location, a process which may be prohibitive for complex systems;
	secondly, the computational and memory complexities prohibit application to large systems; finally, the network structure may not be preserved by identification. For instance, standard subspace identification methods have $\mathcal{O}[n^3]$ an $\mathcal{O}[n^2]$ computational and memory complexities respectively, and any sparse structure in the dynamics is destroyed through an unknown similarity transformation \cite{verhaegen2007filtering}.

Previous work in networked system identification can be loosely categorized into two areas; the identification of a network topology \cite{materassi2010topological, materassi2012problem,sanandaji2011exact}, and the identification of a system's dynamics with known topology.
In the latter category, almost all prior work has focused on the case where subsystems are linear time invariant (LTI)  and described by state space models \cite{haber2014subspace,yu_subspace_2017,yu2019subspace} or transfer functions (a.k.a. modules) \cite{van_den_hof_identification_2018,dankers2016identification}.	

When identifying the subsytem dynamics, states or outputs of neighbors are treated as exogenous inputs, ignoring feedback loops induced by the network topology. This improves scalability as the identification of each subsystem can be performed in parallel. However, accurate identification of the individual subsystems does not imply accurate identification of the full network, because the ignored feedback loops may have a strong effect and even introduce instability. A simple case with two subsystems which has received significant attention is closed-loop identification \cite{forssell1999closed}

%\tb{When identifying the dynamics of a subsystem, states or outputs of neighbors are treated as exogenous inputs, effectively ignoring feedback loops from the network topology.}

%\maybe{	Previous works assume a stable model and prove consistency and identifiability. Consistency then implies stability in the infinite data limit. ...}

Prior works in networked system identification assume stable LTI network dynamics and establish identifiability \cite{weerts2018identifiability} and consistency \cite{van2013identification}. These assumptions then imply model stability in the infinite data limit. However, model stability is not guaranteed with finite data sets or in non-linear black-box identification problems, where the true system is usually not in the model set. 

	\subsection{Identification of Stable Models}	\label{intro:stable_id}  
	
	%	\tb{For unconstrained models, optimization of equation error is quite simple and can be solved using linear least squares. The situation changes, however, if stability of the identified model is required.}
	
Standard methods for system identification do not guarantee model stability, even if the system from which the data are collected is stable. For linear system identification considerable attention has been paid to this problem, and several methods have been suggested based on regularisation or model constraints \cite{van2001identification, maciejowski1995guaranteed, lacy2003subspace,miller2013subspace}. Even for linear systems, the set of stable models is not convex using standard parameterisations, to the authors' knowledge all existing methods introduce some bias in the identification procedure.

For linear systems most definitions of stability are equivalent. The nonlinear case is more nuanced, and the definition used depends on the requirements of the problem at hand. Standard Lyapunov methods are not appropriate in system identification as the stability certificate must be constructed about a known stable solution, whereas the very purpose of system identification is to predict a system's response to previously unseen inputs. Contraction \cite{lohmiller1998contraction} and incremental stability (e.g. \cite{tran2018convergence}) are more appropriate since they ensure stability of all possible solutions and consequently, do not require a-priori knowledge of the inputs and state trajectories.
		Stability guarantees have also been investigated for nonlinear system identification.
		For instance, systems can be identified using sets of stable recurrent neural networks \cite{miller_stable_2018} or stable Gaussian process state space models \cite{umlauft2017learning}. A limitation of these approaches is that they do not allow joint search for a model and its stability certificate, which can be conservative even for linear systems.

This paper builds on previous work in jointly-convex parameterization of models and their stability certificates via implicit models \cite{megretski2008convex,bond2010compact,tobenkin2010convex, tobenkin2017convex,revay2019contracting,revay2020convex} and associated convex bounds for model fidelity via Lagrangian relaxation \cite{tobenkin2017convex, umenberger2018specialized, umenberger2019convex}. %The main development in this paper is to significantly improve scalability of this approach via a novel model formulation.
\tb{The main development in this paper is to significantly improve scalability of this approach via a novel model parametrization and contraction constraint that are jointly convex and permit a particular upstream/downstream network decomposition (defined below).}

	\subsection{Monotone and Positive Systems} \label{intro:monotone_systems}
	
	%Monotone systems are a class of dynamic system occurring frequently in large-scale geographically distributed phenomena. 
	Monotone systems are a class of dynamic system characterized by the preservation of an order relation for solutions (c.f. Definition \ref{def:monotone} below). A closely related class is \textit{positive systems}, for which state variables remain non-negative  for all non-negative inputs (c.f. Definition \ref{def:positive} below). For linear systems, positivity and monotonicity are equivalent.
%	Processes such as traffic networks \cite{como2015throughput,lovisari2014stability,coogan2014dynamical}, chemical reactions \cite{de2007monotone}, combination therapies \cite{rantzer2014control,hernandez2011discrete,hernandez2013optimal,jonsson2014scalable}, wildfires \cite{somers2019priority} and power scheduling \cite{rantzer2014control} exhibit  monotone and/or positive dynamics. Note that these applications also frequently involve large-scale systems with sparse interconnections.
	
	%Other notable characterstics that often appear in monotone systems are their large-scale and sparsity. 
	%Linear monotone systems are often called positive systems.

%Positive systems are systems whose state variables are positive for positive input and positive initial conditions.
	%Montonicity then implies positivity with the additional condition that the origin is a solution. In particular, this implies that linear monotone systems are positive.}
%	positive: states remain non-negative.
%	Linear monotone systems are positive.
	
A useful property of monotone systems is that they often admit simplified stability tests.
	In particular, for linear positive systems the existence of \textit{separable} Lyapunov functions, i.e. those representable as the sum or maximum over functions of individual state variables, is necessary and sufficient for stability \cite{berman1994nonnegative}. This property has been used to simplify analysis \cite{haddad2010nonnegative}, control \cite{rantzer2015scalable} and identification \cite{umenberger2016scalable} of positive systems. 
	Separable stability certificates have also been shown to exist for certain classes of nonlinear monotone systems \cite{hirsch2006monotone,dirr2015separable,manchester2017existence, kawano2019contraction}. and have been used for distributed stability verification \cite{coogan2019contractive} and control \cite{shiromoto2018distributed}. 
		Monotonicity can also simplify nonlinear model predictive control \cite{rantzer2014control} and formal verification using signal temporal logic \cite{sadraddini2018formal}.

	There are however, few identification algorithms that guarantee monotonicity. In \cite{shen2010estimation}, monotone gene networks are identified using the monotone P-splines developed in \cite{shen2011estimation}. This approach, however, does no guarantee model stability. 	

	\subsection{Least-Squares Equation Error}
	Identification typically involves the optimization of a quality of fit metric over a model set. In this paper we use what is arguably the simplest and most widely-applied quality-of-fit metric, least-squares equation error (a.k.a. one step ahead prediction error):
	%	\begin{equation}
	%		J_{ee} = \sum_{t=0}^{T-1} |a_\theta(\tilde{x}_{t}, \tilde{u}_{t}) - \tilde{x}_{t+1}|^2 + |g_\theta(\tilde{x}_t, \tilde{u}_t) - \tilde{y}_t|^2,
	%	\end{equation}
	\begin{equation} \label{eq:explicit_equation_error}
	J_{ee}(\theta) = \sum_{t=0}^{T-1} |a(\tilde{x}_{t}, \tilde{u}_{t}) - \tilde{x}_{t+1}|^2,
	\end{equation}
	where $\tilde{x}_t \in \mathbb{R}^n$ and $\tilde{u}_t\in \mathbb{R}^m$ are state and input measurements or estimates. 
%	There are a number of scenarios where equation error is used in system identification: \maybe{check punctuation. Colon followed by paragraphs?}
%	There are three common situations where equation error is used:
	Least-squares equation error is a natural choice for short-term prediction if state measurements are available.
	
	If long-term predictions are needed, then simulation error, defined as
	\begin{gather}
	J_{se}(\theta) = \sum_{t=0}^{T-1} |x_t - \tilde{x}_t |^2 ,~ s.t. ~ x_{t+1} = a(x_t, \tilde{u}_t),
	\end{gather}
	is \tb{a better measure of performance}. The dependence on simulated states, however, renders the cost function non-convex \cite{sjoberg1995nonlinear,ljung1986system} \tb{and notoriously difficult to optimize \cite{ribeiro2020smoothness}. Consequently, } equation error optimization is often used to initialize local search methods (e.g. gradient descent) for models with good simulation error or used as a surrogate for simulation error \tb{with better numerical properties}.
	 In the latter context, model stability is particularly important since a model can have small equation error but be unstable and therefore exhibit very large simulation error. \tb{In fact, when a model is contracting, it can be shown that small equation error implies small simulation error \cite{megretski2008convex}.}
	
	In many contexts, system state measurements are not available. Nevertheless, equation error frequently arises as a sub-problem via estimated states, e.g. in subspace identification algorithms \cite{van1994n4sid,van2012subspace,yu2019subspace}, where states are estimated using using matrix factorizations, or in maximum likelihood identification via the expectation maximization (EM) algorithm where they are estimated from the joint smoothing distribution \cite{schon2011system, umenberger2018maximum}.

	\subsection{Contributions}
	The main contributions of this work as are follows: we propose a model structure and convex constraints that guarantee monotonicity, positivity, and/or contraction of the model. For large scale networked systems, we refine the model and constraints to have a separable structure, and we introduce a separable bound on equation error, so the identification problem can be solved using distributed computation.
%	We then show these models can be fit by optimizing an upper bound
%	 We then show how the use of Lagrangian relaxation of equation error can be used to construct a separable cost function that has a reduced biasing effect compared to previous methods that guarantee stability. 
The algorithm, based on ADMM,  decomposes into easily solved separable optimization problems at each step. Data and parameters are only communicated to immediate neighbours in the network. Finally, we evaluate the scalability and fitting performance of the method on a number of numerical examples.

	\section{Preliminaries and Problem Setup}

%	Network structure, definitions, separability, problem definition

	\subsection*{Notation}	
	A graph $\mathscr{G}$ is defined by a set of nodes (vertices) $\mathscr{V} = [1,...,N]$ and edges $\mathscr{E} \subset \mathscr{V\times V}$.
	The vector $\bm{1}$ is the column vector of ones, with size inferred from context.
	For vectors $v$, $v> 0 $ refers to the element-wise inequality. For matrices $M$, $M\geq0$ and $M\leq0$ refer to element-wise inequalities.
	For symmetric matrices $M$, $M\succ 0 $ means that $M$ is positive definite.
	For a vector $v$, $diag(v)$ is the matrix with the elements of $v$ along the diagonal.
	The set of $n \times n$ symmetric matrices is denoted $\mathbb{S}^{n\times n}$.
	The set of $n \times n$ non-singular M-matrices is denoted $\mMatrix^n$. For a matrix $A$, $A\in \mMatrix^n$ means $A^{ij} \leq 0, ~\forall i \neq j$ and $\text{real}(\lambda_i) > 0$ for $i = 1,...,n$, where $\lambda_i$ are the eigenvalues of $A$. 	
	For brevity, we will sometimes drop the arguments from a function where the  meaning may be inferred from context.
% Time indexing a function means that the arguments are evaluated at that time. E.g. $f_t = f(x_t,u_t)$.

	\subsection{Differential Dynamics}
	
	The contraction and monotonicity conditions we study can be verified by way of a systems \textit{differential dynamics}, a.k.a. linearized, variational, or prolonged dynamics. For the system \eqref{eq:global system}, the differential dynamics are
	\begin{equation}\label{eq:explicit_diff_dynamics}
	\delta_{x_{t+1}} = A(x_t, u_t, u_{t+1}) \delta_{x_{t}} + B(x_t, u_t, u_{t+1}) \delta_{u_t}.
	\end{equation}
	where $A = \parDiff{a}{x}$ and $B=\parDiff{a}{u}$. In conjunction with \eqref{eq:global system}, the differential dynamics describe the linearized dynamics along \textit{all} solutions of the system.

	\subsection{Contraction Analysis} 	\label{sec:intro contraction}
	
	%	Contraction analysis and incremental stability define stability in terms of the distances between trajectories, where contraction implies incremental stability \cite{tran2018convergence}. 
	We use the following definition of nonlinear stability:
	\begin{definition}[Contraction] \label{def:Incremental Lp Stability}
		A system is termed contracting with rate $\alpha$, where $0<\alpha < 1$, if for any two initial  conditions $x^a_0$, $x^b_0$,  given the same input sequence $u_t$, and some $p \in [1,\infty]$,
		there exists a continuous function $b_p(x^a_0, x^b_0) > 0$ such that the corresponding trajectories $x^a_t, x^b_t$ satisfy \tb{$|x^a_t - x^b_t|_p < \alpha^t b_p(x^a_0, x^b_0)$}.
	\end{definition} \noindent 
	Contraction can be proven by finding a contraction metric which verifies conditions on the differential dynamics \cite{lohmiller1998contraction}. 
	A contraction metric is a function $V(t, x, \delta_x)$ such that:
	\begin{gather} 
	V(t, x, 0) = 0, ~ ~ V(t, x, \delta) \ge  \mu |\delta|_p, \\
	V(t+1, x_{t+1}, \delta_{x_{t+1}}) \leq \alpha V(t, x, \delta_x) .\label{eq:contraction_metric}
	\end{gather}
	for some $\mu>0$
	%		\maybe{I think leave it as is as it is more clearly identifiable as a Lyapunov inequality and intuition is stronger. Composition is mentioned below. } 
	%The existence of such a function is sufficient to guarantee stability in the sense of Definition \ref{def:Incremental Lp Stability} \cite{tran2018convergence}. 
		
	The choice of contraction metric $V(t,x,\delta)$ is problem dependent. Prior works have proposed quadratic contraction metrics for which \eqref{eq:contraction_metric} is linear in the stability certificate and can be verified using semi-definite programming. A number of works have also noted that using a weighted $\ell_1$ norm can lead to separable constraints \cite{russo2010stability,coogan2019contractive} allowing for stability verification of large-scale networked systems.

In the context of system identification, the joint search for model $a$ in \eqref{eq:global system} and contraction metric $V$ is non-convex due to the  nonlinear function \tb{composition $V(t+1, x_{t+1}, \delta_{x_{t+1}}) =V(t+1, a(x,u), A(x,u)\delta_{x_t})$}.
	
%Note, however, that using the dynamics \eqref{eq:global system} with condition \eqref{eq:contraction_metric} leads to a non-convex constraint due to the nonlinear function composition $V(a(x,u), A(x,u)\delta_x)$.
	%		Previous methods for stable identification have focused on proving contraction by searching for quadratic contraction metrics.	
	%		It has been shown, however, that $\ell_1$ contraction can lead to separable contraction constraints \cn. 
	%	In particular, separable $\ell_1$ contraction metrics have been shown to exist for many monotone systems.
	
%	\maybe{reverse order. Give 4,5,6 and common structures.			when serarhcing for model aprameters as well, non-convex.	}
	
	\subsection{Monotone and Positive Systems}
We now define system monotonicity and positivity of dynamical systems.
	\begin{definition}[Monotone System] \label{def:monotone}
		A system \eqref{eq:global system} is termed monotone if for inputs $u^a_t$ and $u^b_t$ and initial conditions $x^a_0$, $x^b_0$, the following implication holds:
		$$ x^a_0 \geq x^b_0, ~u^a_t \geq u^b_t~ \forall t  \implies x^a_t \geq x^b_t~ \forall t.$$
	\end{definition} 	\noindent 
	Monotonicity results from $A(x,u) \ge 0$  and $B(x,u) \ge 0$ where $A$ and $B$ come from the differential dynamics \eqref{eq:explicit_diff_dynamics}. 
	
	\begin{definition}[Positive System] \label{def:positive}
		A system \eqref{eq:global system} is positive if for all inputs $u_0,...,u_T \ge 0$ and initial conditions $x_0 \ge 0$, the resulting trajectory has $x_1,...,x_T \ge0$.
	\end{definition}
	A sufficent condition for a system to positive is for it to be monotone and admit $x_t=0, u_t=0 \,\forall t$ as a solution, i.e.  
	$a(0, 0, 0) = 0$ in \eqref{eq:global system}.
	%	\maybe{It is actually really hard to find a reference for this...}
	
%	\maybe{Positive system...}
	
	%	\subsection{Well-Posedness}
	%\tb{	A system is well posed if
	%	\begin{definition}[Well-posed]
	%		A system \eqref{eq:global system} if for any initial condition $x_0$, and sequence of inputs $u_0,...,u_T$, there exists a unique trajectory $x_1,...,x_T$ satisfying \eqref{eq:global system}.
	%	\end{definition}
	%	}

	\subsection{Network Structure}
	We assume model \eqref{eq:global system} is partitioned into $N$ subsystems. The interactions between these subsystems is described by a directed graph $\mathscr{G} = (\mathscr{V}, \mathscr{E})$. Here, we have a set of nodes denoted $\mathscr{V} = \{1,...,N\}$ corresponding to the subsystems.  Each subsystem has its own state denoted $x^i \in \mathbb{R}^{n_i}$ and may take an input denoted $u^i \in \mathbb{R}^{m_i}$ (we allow for the case $m_i = 0$). The global state and input is attained by concatenating the states and inputs of each subsystem,
	
	%	\begin{equation}\label{eq:global state}
	%	x = \begin{bmatrix}
	%	x^1\\ \vdots \\ x^N
	%	\end{bmatrix} \in \mathbb{R}^n, ~ ~
	%	u = \begin{bmatrix}
	%	u^1\\ \vdots \\ u^N
	%	\end{bmatrix}\in \mathbb{R}^m.
	%	\end{equation}
	\begin{equation}\label{eq:global state}
	x = \begin{bmatrix}
	x^1\\ \vdots \\ x^N
	\end{bmatrix}, ~ ~
	u = \begin{bmatrix}
	u^1\\ \vdots \\ u^N
	\end{bmatrix}.
	\end{equation}

	The set of edges $\Edges \subseteq \Nodes\times \Nodes$ describes how the subsystems interact with each other. In particular, $(j,i) \in \Edges$ means that the state of subsystem $j$ affects the state of subsystem $i$. The edge list $\mathscr{E}$ may arise naturally from the context of the problem, e.g. in traffic networks where edges come from the physical topology of the road network, or may be identified from data \cite{materassi2010topological,materassi2013model}.
	%	\todo{intro?}
	
	For each subsystem $i \in \mathscr{V}$, we define the set of upstream neighbours $\upstream{\Nodes} = \{ j | (j,i) \in \Edges \}$ and the set of downstream neighbours $\downstream{\Nodes} = \{j | (i,j) \in \Edges\}$. The term \textit{upstream neighbours} of $i$ refers to the subsystems whose state affects the state of subsystem $i$, and the term \textit{downstream neighbours} refers to the subsystems whose state is  affected by subsystem $i$'s state. In general, we allow self-loops so that a node can be both upstream and downstream to itself. This notation is illustrated in Fig.  \ref{fig:Notation Illustration}.
	
	\begin{figure}
		\centering
		\includegraphics[trim = {0cm 10cm 0cm 3cm}, clip, width = 0.7\linewidth]{./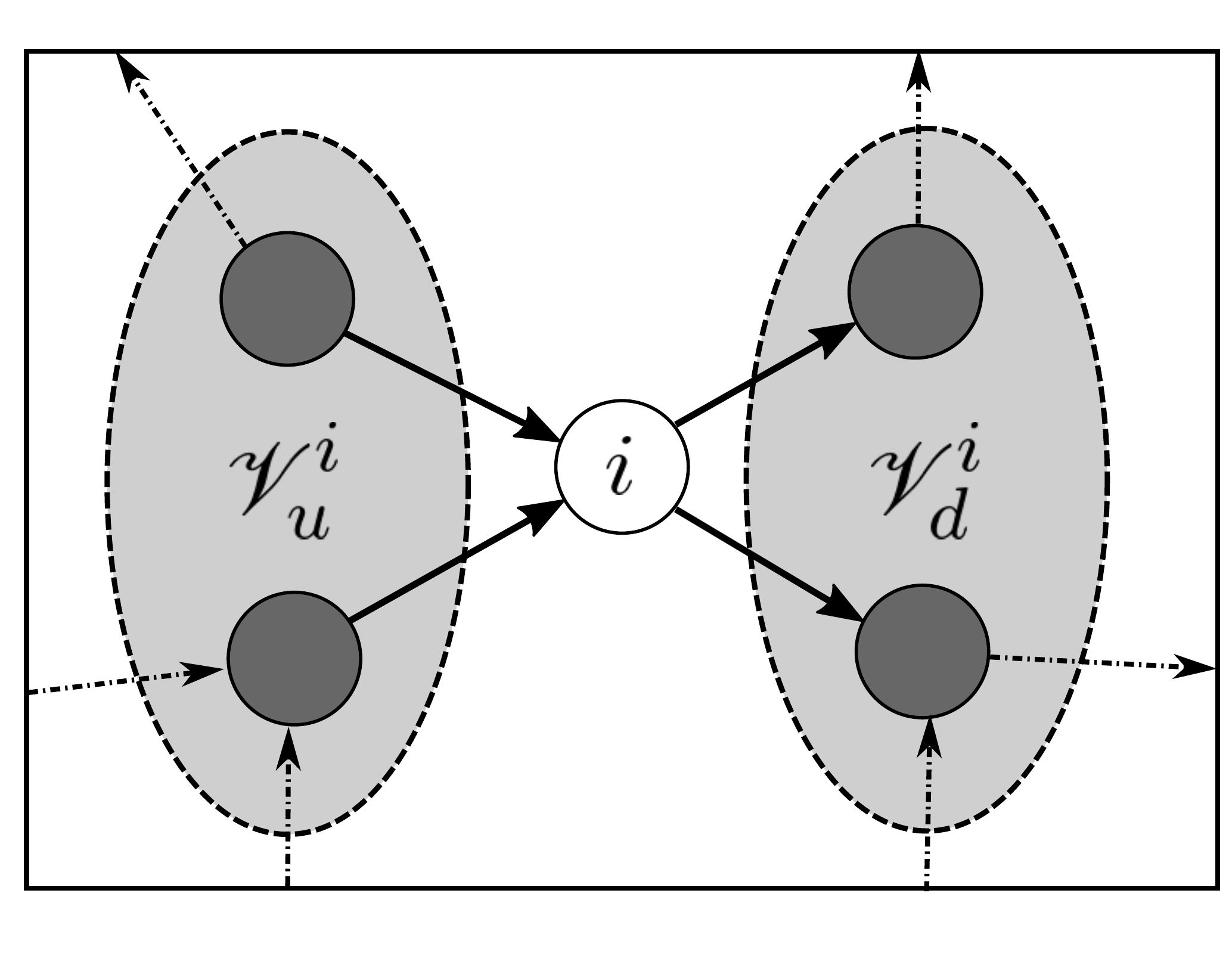}
		\caption{\label{fig:Notation Illustration} Illustration of upstream/downstream notation.}
	\end{figure}
	
	%	The notation $\breve{x}^i \in \mathbb{R}^{\breve{n}_i}$ is used as short-hand for the vector obtained by concatenating the states $\{x_j | j \in \upstream{\Nodes}\}$.  As we allow self-loops, $\breve{x}^i$ can contain the state $x^i$. Similarly, we define the vector $\breve{u}^i$ as the vector obtained by stacking up the inputs $\{u^j | j \in \upstream{\Nodes}\}$.
	We can write the dynamics of the individual interacting subsystems as follows:
	\begin{gather} \label{eq:explicit system}
	x^i_{t+1} = a^i(\breve{x}^i_{t}, \breve{u}^i_{t}, \breve{u}^i_{t+1}), ~ ~i=1,...,N.
	\end{gather}
	where $a^i$ corresponds to the $i^{th}$ element in \eqref{eq:global system} and  $\breve{x}^i = \{x_j~|~ j \in \upstream{\Nodes}\}$ and $\breve{u}^i = \{u^j ~|~ j \in \upstream{\Nodes}\}$. 
	%	\tb{We assume that each subsystem can communicate with both upstream and downstream neighbors.}

	\subsection{Separable Optimization using ADMM} 
	 \label{sec:separability}
	
%	\maybe{Our approach leads to separable steps in ADMM. nodes can be threads in a  computer or geographically distributed.}
%
%\maybe{simplify and link to previous subsection}
%
%	
%	\maybe{E.g. upstream and downstream neighbours}

	Consider an optimization problem of the form,
	\begin{equation} \label{eq:generic_optimization}
	\underset{\theta}{\min} ~J(\theta), 
	\end{equation}
which may include constraints on $\theta$ via indicator functions appearing in $J$. The indicator function for the constraint $\theta \in \Theta$ is the function $\mathcal{I}_\Theta(\theta)$ which is zero for $\theta \in \Theta$ and infinite otherwise.
	
	\begin{definition}[Separable]
		The problem \eqref{eq:generic_optimization} is termed separable with respect to the partitioning $\theta = \{\theta^i ~|~ i=1,..,N\}$ if it can be written as $J(\theta) = \sum_{i=1}^N J^i(\theta^i)$. 
	\end{definition}

\noindent In this paper we encounter problems of the form:
		\begin{equation} \label{eq:admm_problem}
		\underset{\theta}{\min} ~\sum_{i=1}^N J_a^i(\theta^i_a) + \sum_{j=1}^M J_b^j(\theta^j_b), 
		\end{equation}
		where $\{\theta^i_a ~ | ~ i=1,...,N\}$ and $\{\theta^j_b ~ | ~ j=1,...,M\}$ are two different partitions of the same vector $\theta$. In our context, these partitionings correspond to the sets of upstream or downstream neighbors discussed in the previous section.
		 For such problems, the alternating directions method of multipliers (ADMM) can be applied \cite{boyd2011distributed}. We write \eqref{eq:admm_problem} as 
		\begin{align} \label{eq:separable_admm} 
		\min_{\theta, \phi}\ \  &  \sum_{i=1}^N J^i_a(\theta^i_a) + \sum_{j=1}^M J^j_b(\phi^j_b), \\
		s.t. \ \ &  \theta - \phi = 0. \nonumber  \nonumber
		\end{align}
		Applying ADMM results in iterations in which each step is separable with respect to the partition $\theta_a$ or $\theta_b$, and can thus be solved via distributed computing. For convex problems, ADMM is guaranteed to converge to the optimal solution \cite{boyd2011distributed}.
		
%		Applying ADMM gives the following iteration scheme for $k=1,...$ and partitions $i=1,..,N$ and $j=1,...,M$ \cite{boyd2011distributed}:
%		
%		\begin{flalign} 
%		&\theta^i_a(k+1) = \argmin_{\theta^i_a}  {J}^i(\theta^i_a) + \frac{\rho}{2}||\theta^i_a  - \phi^i_a(k) + v^i_a(k)||^2, \label{ADMM iterates 1.1} &\\
%		&	\phi^j_b(k+1) = \argmin_{\phi^j_b \in \Theta^j}  \frac{\rho}{2}||\theta^j_b(k+1) - \phi^j_b -  v^j_b(k)||^2,  ~\label{ADMM iterates 1.2}& \\
%		&	v(k+1) = v(k) - \theta(k+1) + \phi(k+1). & \label{ADMM iterates 1.3} 
%		\end{flalign}
%		for $\rho > 0$.
%		In the resulting algorithm, each step involves solving a separable optimization problem.

%	where $\mathcal{I}_{\mc{}}(\phi)$ is the indicator function for the set $\mc{}$, i.e. the function that returns $0$ for $\phi \in \mc{}$ and returns infinity for $\phi \notin \mc{}$.

	\subsection{Problem Statement}
	To summarise, the main objective of this paper is as follows. Given state and input measurements $\{\tilde{x}_t, \tilde{u}_t~|~ t=1,..,T\}$, and a graph $\mathscr{G}$ describing the network topology, identify models \eqref{eq:explicit system} at each node such that: 
	\begin{itemize}
	  \item during the identification procedure, each subsystem only communicates with immediate (upstream and downstream) neighbours;
  \item convergence is guaranteed and least-squares equation error is small at each subsystem;
  \item model behavioural constraints such as contraction, monotonicity, and/or positivity can be guaranteed for the interconnected system \eqref{eq:global system}.
\end{itemize}

%	An optimization problem is termed \textit{downstream-separable} if it can be separated into optimization problems $i = 1,...,N$, each depending on independent parametrizations $\theta^i_d$. Similarly, an optimization is termed \textit{upstream-separable} if it can be separated into optimization problems $i = 1,...,N$, each depending on independent parametrizations $\theta^i_u$. The idea of upstream and downstream separable optimization problems is closely related to the column-wise and row-wise separable optimization problems used in \cite{wang2018separable}.
	
	\section{Convex Behavioral Constraints \label{sec:convex constraints} }

%		Optimization over the dynamics \eqref{eq:explicit system} with stability constraints leads to a non-convex set of models and contraction metrics.
	In this section we develop a convex parametrization of models with contraction, monotonicity and/or positivity guarantees. As described in subsection \ref{sec:intro contraction}, jointly searching for a model \eqref{eq:global system} and contraction metric is non-convex.

	Following \cite{tobenkin2010convex, tobenkin2017convex}, we solve this problem by instead searching for models in the following implicit form:	
	%	To aid in the search of the functions $a^i(\breve{x}^i, {u}^i)$, we consider the following implicit parametrization of models
	\begin{equation} \label{eq:global_implicit_system}
	e(x_{t+1}, u_{t+1}) = f(x_t, u_t).
	\end{equation}
 	The differential dynamics of \eqref{eq:global_implicit_system} are: 
	 \begin{gather}
	 E(x_{t+1}, u_{t+1}) \delta_{x_{t+1}} = F(x_{t},u_{t}) \delta_{x_t} + K(x_{t},u_{t}) \delta_{u_t},  \label{eq:differential dynamics} 
	 \end{gather}
	 where $E = \frac{\partial e}{\partial x}$, $F = \frac{\partial f}{\partial x}$ and $K = \frac{\partial f}{\partial u}$. 
%	We will make frequent reference to the downstream and the upstream parameters. These are denoted $\downstream{\theta} = \{\theta^{i}_e, \theta^{ji}_f| j \in \downstream{\Nodes}\}$ and $\upstream{\theta} = \{\theta^{i}_e, \theta^{ij}_f| j \in \upstream{\Nodes}\}$ respectively. 
%	The concatenation of all parameters in the model is denoted $\theta$.
%	Models of the form \eqref{eq:implicit model} are referred to as an implicit models as the next state $x^i_{t+1}$ appears inside the function $e$.  
%	An implicit model for the global network dynamics is given by $	e(x_{t+1}, u_{t+1}) = f(x_{t}, {u}_{t})$ where $e$ and $f$ are obtained by stacking up the functions $e^i$ and $f^i$ respectively.
%	Models for which $e$ is invertible are termed \textit{well-posed} and allow for an explicit system of the form \eqref{eq:explicit system} to be recovered by taking $a = e^{-1} f$. 	

	\begin{definition}[Well-Posed]
		An implicit model of the form \eqref{eq:global_implicit_system} is termed  \textit{well-posed} if for every $x_t, u_t, u_{t+1}$ there is a unique $x_{t+1}$ satisfying \eqref{eq:global_implicit_system}.		
%		A system \eqref{eq:global system} if for any initial condition $x_0$, and sequence of inputs $u_0,...,u_T$, there exists a unique trajectory $x_1,...,x_T$ satisfying \eqref{eq:global system}.
	\end{definition}
I.e., well-posedness means that $e(x,u)$ is a bijection with respect to its first argument, and implies the existence of an explicit model of the form \eqref{eq:global system} where $a = e^{-1}\circ f$. Furthermore, it implies  that for any initial condition $x_0$ and sequence of inputs $u_0,...,u_T$, there exists a unique trajectory $x_1,...,x_T$ satisfying \eqref{eq:global_implicit_system}.
%	\tb{properly define well posedness here.}

	%	\tb{where $e^i$ and $f^i$ are defined below in \eqref{eq:additive model a}, \eqref{eq:additive model b}.
	
	%	The mapping \eqref{eq:implicit model} is referred to as an implicit model as the next state $x^i_{t+1}$ appears inside the function $e$.  
	
	%	\begin{gather}
	%%	\label{eq:additive model a}
	%%	e^i(\argsf{}) = \sum_{j \in \upstream{\Nodes}} e^{ij}(x^j, u^j ),\\
	%	f^i(\argsf{}) = \sum_{j \in \upstream{\Nodes}} f^{ij}(x^j, u^j ),  \label{eq:additive model b}
	%	\end{gather}
	%	where the basis functions $e^{ij}: \mathbb{R}^{n_j} \times \mathbb{R}^{m_i} \mapsto \mathbb{R}^{n_i}$ and $f^{ij}: \mathbb{R}^{n_j} \times \mathbb{R}^{m_i} \mapsto \mathbb{R}^{n_i}$ are polynomials in their respective variables, linearly parametrized by $\theta_e^{ij}$ and $\theta_f^{ij}$. The concatenation of all parameters in the model is denoted $\theta$.

	\subsection{ Stability and Monotonicity Constraints}
	
	In this section, we develop convex conditions on the implicit model \eqref{eq:global_implicit_system} that guarantee well-posedness, monotonicity, positivity, and contraction. The main result is the following:
	
	\begin{theorem}\label{thm: L1 contracting models}
		 A model of the form \eqref{eq:global_implicit_system} is:
		\begin{enumerate} [(a)]
			\item \textbf{well-posed} if there exists $\epsilon >0$ such that for all $(x,u)$,
			\begin{align}\label{eq:well-posed}
			E(x,u)+E(x,u)^T \succ \epsilon I,
			\end{align}
			\item \textbf{contracting} with rate $\alpha$ if (a) holds and there exists a matrix function $S(x,u):\R^n\times\R^m\to\R^{n\times m}$ such that for all $(x,u)$:
			\begin{align} 
			&-S(x,u) \le F(x,u) \le S(x,u),  \label{eq:slack_condition}\\ 			
			&\mathbf{1}^\top(\alpha E(x,u)-S(x,u))\ge 0, 	\label{eq:l1_contraction_polytone}
			\end{align}
			\item \textbf{monotone} if  (a) holds and for all $(x,u)$: \begin{align}\label{eq:monotonicity}
			F(x,u)\geq 0, \quad K(x,u)\geq 0, \quad E(x,u)\in \mathbb{M}^n,
			\end{align}
			
			\item \textbf{positive} if (c) holds and:\\
				\begin{equation}
			e(0) = f(0,0), \label{eq:origin_equilibria}
			\end{equation}
			
			\item \textbf{contracting and monotone} if (a) and (c) hold, and for all $(x,u)$
			\begin{align} \label{eq:l1 contraction condition} 
			&\mathbf{1}^\top(\alpha E(x,u)-F(x,u))\geq 0.
			\end{align}
			Positivity is also enforced if \eqref{eq:origin_equilibria} holds.
			
		\end{enumerate}
	\end{theorem}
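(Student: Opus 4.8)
The plan is to dispatch the five claims in order, proving well-posedness first and then reading monotonicity, positivity and contraction off the differential dynamics \eqref{eq:differential dynamics}. For (a), I would fix $u$ and show that $g(\cdot):=e(\cdot,u)$ is a global bijection of $\R^n$. Condition \eqref{eq:well-posed} gives $v^\top E(x,u)v=\tfrac12 v^\top(E+E^\top)v\ge\tfrac{\epsilon}{2}|v|^2$, so $E$ is everywhere nonsingular and, by the fundamental theorem of calculus, $g$ is uniformly (strongly) monotone: $\langle g(x)-g(y),x-y\rangle=\int_0^1 (x-y)^\top E(y+s(x-y),u)(x-y)\,ds\ge\tfrac{\epsilon}{2}|x-y|^2$. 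Strong monotonicity yields injectivity immediately and coercivity ($|g(x)|\to\infty$), hence surjectivity; equivalently $g$ is a proper local diffeomorphism, so Hadamard's global inverse theorem makes it a diffeomorphism. This produces the explicit model $a=e^{-1}\circ f$ and the unique trajectories asserted after the definition of well-posedness.

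For the contraction claims (b) and (e) I would use the separable weighted-$\ell_1$ metric $V(x,\delta)=\mathbf 1^\top E(x,u)\,|\delta|$ and verify the decrease \eqref{eq:contraction_metric}. Comparing two trajectories under a common input and writing $\Delta_t=x^a_t-x^b_t$, the mean-value theorem applied to \eqref{eq:global_implicit_system} gives $\bar E_{t+1}\Delta_{t+1}=\bar F_t\Delta_t$, where $\bar E_t,\bar F_t,\bar S_t$ are the Jacobians averaged along the connecting segment and hence still satisfy \eqref{eq:well-posed}--\eqref{eq:l1_contraction_polytone}. The slack condition \eqref{eq:slack_condition} yields the elementwise domination $|\bar F_t\Delta_t|\le\bar S_t|\Delta_t|$, and since $E^{-1}\ge0$ one obtains $|\Delta_{t+1}|=|\bar E_{t+1}^{-1}\bar F_t\Delta_t|\le\bar E_{t+1}^{-1}\bar S_t|\Delta_t|$. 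Left-multiplying by the nonnegative row vector $\mathbf 1^\top\bar E_{t+1}$ telescopes, because $\mathbf 1^\top\bar E_{t+1}\bar E_{t+1}^{-1}=\mathbf 1^\top$, giving $V_{t+1}\le\mathbf 1^\top\bar S_t|\Delta_t|\le\alpha\,\mathbf 1^\top\bar E_t|\Delta_t|=\alpha V_t$, where the middle step is exactly the column-sum condition \eqref{eq:l1_contraction_polytone}. Iterating $V_t\le\alpha^t V_0$ and using $V\ge\mu|\delta|_1$ recovers the bound of Definition \ref{def:Incremental Lp Stability}. Part (e) is then the special case $S=F$, admissible precisely because monotonicity supplies $F\ge0$, so that \eqref{eq:l1 contraction condition} coincides with \eqref{eq:l1_contraction_polytone}.

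For (c) I would show the explicit Jacobians are nonnegative: differentiating \eqref{eq:global_implicit_system} gives $\partial x_{t+1}/\partial x_t=(E^+)^{-1}F$ and $\partial x_{t+1}/\partial u_t=(E^+)^{-1}K$, both nonnegative because $E\in\mathbb{M}^n$ forces $(E^+)^{-1}\ge0$ while \eqref{eq:monotonicity} supplies $F,K\ge0$; integrating the differential dynamics along a path joining the two initial conditions and input sequences then gives $x^a_t\ge x^b_t$, as required by Definition \ref{def:monotone}. Part (d) follows from the sufficient condition stated after Definition \ref{def:positive}: \eqref{eq:origin_equilibria} makes $x_t\equiv0$ a solution for zero input, so monotonicity applied with $x_0\ge0$ and $u\ge0$ forces $x_t\ge0$ for all $t$.

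The step I expect to be the main obstacle is the contraction argument for the \emph{implicit} model, specifically the mismatched time indices: the metric weight $E$ is evaluated at $t+1$ while $F$ and $S$ are evaluated at $t$, so the naive quantity $\mathbf 1^\top|\bar E_{t+1}\Delta_{t+1}|$ does not telescope and can in fact transiently grow. The nonnegativity $E^{-1}\ge0$ coming from the M-matrix structure is exactly what lets $\mathbf 1^\top E_{t+1}$ pass through the inverse cleanly, so the pure-contraction claim (b) is the most delicate point and is where I would check most carefully that this structure is genuinely being invoked. The remaining technical items are verifying that $\mathbf 1^\top E$ defines a true metric (i.e.\ $\mathbf 1^\top E\ge\mu\mathbf 1^\top>0$, which needs the column sums bounded away from zero) and that the averaged matrices $\bar E,\bar F,\bar S$ retain all the pointwise constraints.
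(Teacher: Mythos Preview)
Your approach closely tracks the paper's. For well-posedness, monotonicity and positivity your arguments are essentially those given there (the paper cites a prior result for (a), but the underlying mechanism is the strong-monotonicity argument you sketch). For contraction the paper also builds a weighted $\ell_1$ metric from $E$ and shows $V_{t+1}\le\alpha V_t$; the differences are that it takes $V=|E\delta|_1$ rather than your column-sum version $V=\mathbf{1}^\top E\,|\delta|$, and it works with the infinitesimal differentials $\delta$ first and only afterwards integrates along a one-parameter family of initial conditions (via a separate lemma) to obtain the finite-difference bound of Definition~\ref{def:Incremental Lp Stability}. Your mean-value shortcut collapses these two steps, at the cost of carrying the averaged Jacobians $\bar E,\bar F,\bar S$ through the estimates; you should be explicit about which $E$ actually weights the metric once two different trajectories are in play.

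The concern you flag about part~(b) is correct and is not merely a technicality. The paper's chain $\mathbf{1}^\top(\alpha E-|F|)\ge 0\Rightarrow\mathbf{1}^\top(\alpha-|F|E^{-1})\ge 0\Rightarrow\mathbf{1}^\top(\alpha-|FE^{-1}|)\ge 0$ uses $E^{-1}\ge 0$ at each implication, exactly as your estimate $|\Delta_{t+1}|\le \bar E_{t+1}^{-1}\bar S_t|\Delta_t|$ does; this inverse-positivity is the M-matrix property and is not supplied by \eqref{eq:well-posed} alone. Indeed, for the linear pair $E=\left(\begin{smallmatrix}2&1\\1&2\end{smallmatrix}\right)$, $F=S=\left(\begin{smallmatrix}0&1\\1&0\end{smallmatrix}\right)$ conditions \eqref{eq:well-posed}--\eqref{eq:l1_contraction_polytone} hold with $\alpha=\tfrac12$, yet $E^{-1}F$ has an eigenvalue $-1$, so the model is not contracting at all. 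Parts~(c)--(e), where $E\in\mathbb{M}^n$ is assumed explicitly, go through cleanly by either route; part~(b) as stated needs that extra hypothesis, and both your argument and the paper's implicitly rely on it.
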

	
%	\begin{theorem} \label{thm: L1 contracting models}
%		Suppose we have a system of the form \eqref{eq:implicit model} subject to the following constraints for all $(x,u)$:
%		\begin{gather}
%		E(x,u) + E(x,u)^\top  \succeq \epsilon I \label{eq:well-posed} \\
%%		F(x,u) \geq 0, ~ ~K(x,u) \geq 0,~ ~ E(x,u) \in \mMatrix^n, \label{eq:monotonicity}\\
%		F(x,u) \geq 0, ~ ~K(x,u) \geq 0,~ ~ E(x,u) \in \mMatrix^n, \label{eq:monotonicity}\\
%		\bm{1}^\top (\alphaE(x,u) -  F(x,u)) \geq 0,\label{eq:l1 contraction condition} 
%		\end{gather}
%		where $\mu\geq 0,\epsilon > 0$.   Then, the system \eqref{eq:implicit model} is well-posed, $\ell_1$ contracting and monotone. 
%	\end{theorem}
	\begin{proof}
		See appendix \ref{appendix:thm1}.
	\end{proof}
%	Note that all conditions are convex in $e, f$. 
%	Positivity is also implied by monotonicity and the constraint that the origin is a solution. These can be easily enforced by the constraint 

	We refer to the stability conditions in Theorem \ref{thm: L1 contracting models} (b) or (e) as $\ell_1$ contraction conditions as they ensure contraction using a state dependent weighted $\ell_1$ norm of the differentials: $V(t, x, \delta) = |E(x,u)\delta|_1$, noting that for the purpose of contraction analysis the exogenous input $u$ can be considered as a time-variation.

	\begin{remark} 
		Theorem \ref{thm: L1 contracting models} requires an exponential contraction rate $\alpha$ to be specified. A weaker form of incremental stability can also be imposed by replacing \eqref{eq:l1_contraction_polytone} with
		\begin{gather}
		\mathbf{1}^\top(E(x,u)-S(x,u))\ge \mu \bm{1}^\top
		\end{gather}
		for some $\mu > 0$, and similarly for \eqref{eq:l1 contraction condition}. This implies that $\sum_{t=0}^\infty |x^a_t-x^b_t|_1<\infty$, following a line of reasoning similar to \cite{tobenkin2017convex}.
	\end{remark}

%\maybe{
%	\begin{remark}
%		Insert remark regarding model set. \begin{itemize}
%			\item recurrent neural networks, 
%			\item GP-SSMs
%			\item polynomial models
%		\end{itemize}
%	\end{remark}
%}

	\subsection{Model Parametrizations} \label{sec:model parametrizations}
		
%	The set of models of the form \eqref{eq:global_implicit_system} for which Theorem \ref{thm: L1 contracting models} can be applied is general. 
	
	As formulated above, Theorem \ref{thm: L1 contracting models} applies to models represented by the infinite dimensional space of continuously differentiable functions $e$ and $f$. In practice, these functions are usually parametrized by a finite-dimensional vector. In this section we briefly discuss some common model parametrizations and how the constraints can be enforced.
	
	For linear models, \eqref{eq:well-posed} is a semidefinite constraint, \eqref{eq:slack_condition}-\eqref{eq:l1 contraction condition} are linear and can be enforced using semidefinite programming. Furthermore, if $E$ is diagonal, then \eqref{eq:well-posed} is also linear and the model set is polytopic.
	
	If the functions $e$ and $f$ are  multivariate polynomials or trigonometric polynomials, then the constraints can be enforced using sum of squares programming \cite{parrilo2003semidefinite,megretski2003positivity}.
	
	The model set \eqref{eq:global_implicit_system} also contains a class of recurrent neural networks with slope-restricted, invertible activation functions. In this case, $e(x)$ is the inverse of the activation functions, $f(x,u)$ is affine, and \tb{simulation of the explicit model $a=e^{-1}\circ f$ yields the equation of a standard recurrent neural network \cite{elman1990finding}.
	The }conditions in Theorem \ref{thm: L1 contracting models} (b) or (d) then correspond to diagonal dominance conditions on the weight matrices which can be enforced via linear constraints.
	
	Finally, if the requirement for global verification of these properties is relaxed, then these constraints can be applied pointwise for arbitrary parametrizations $e$ and $f$, which amount to linear and semidefinite constraints if $e$ and $f$ are linearly parametrized.

	\section{Distributed Identification \label{sec:distributed algorithm}}
	
	%	\textcolor{red}{Into needs drastic rewrite.}

	%	In the previous section, we developed a convex set of monotone, contracting models. We now turn our attention to choosing a suitable cost function and develop a distributed algorithm for fitting models.
	In this section we consider the problem of distributed identification of networked systems with the behavioral constraints introduced in Theorem \ref{thm: L1 contracting models}.
% 	As discussed in Section \ref{sec:model parametrizations}, there are many ways to parametrize models. Many of these paramertrizations 
	First, we propose a particular structure for \eqref{eq:global_implicit_system} for which the constraints in Theorem \ref{thm: L1 contracting models} are separable. We then propose an objective function that is separable (with respect to a different partition). Finally we propose an algorithm for fitting the proposed models that requires only local communication between subsystems at each step.
	
	\subsection{Distributed Model}\label{sec:model structure}
	
%	Theorem \ref{thm: L1 contracting models} applies to 
%	In this work we propose to parametrize the implicit model \eqref{eq:global_implicit_system} as follows:

%	When concerned with distributed identification, we propose to use models of the following structure:

	We propose the following model structure for distributed identification, in which $e$ depends only on local states and inputs, and $f$ is a summation of nonlinear functions of states and inputs from upstream neighbours:
	\begin{gather}\label{eq:implicit model}
	e^i(\argse{t+1}) = \sum_{j \in \upstream{\Nodes}} f^{ij}(x^j, u^j ). 
	\end{gather}
	Models of the form \eqref{eq:implicit model} are widely used for statistical modelling, and are referred to as generalized additive models (GAMs) \cite{hastie1990generalized}. This class of models also includes linear systems, and a class of recurrent neural networks.
%	GAMs are very general model structures, where e
	We assume that each of the functions $e^{i}: \mathbb{R}^{n_i} \times \mathbb{R}^{m_i} \mapsto \mathbb{R}^{n_i}$ and $f^{ij}: \mathbb{R}^{n_j} \times \mathbb{R}^{m_i} \mapsto \mathbb{R}^{n_i}$ are linearly parametrized by $\theta_e^{i}$ and $\theta_f^{ij}$ respectively. %Particular realizations are discussed in Section \ref{sec:model parametrizations}.
%	can be almost any function, including those discussed in Section \ref{sec:model parametrizations}. 

%	\maybe{	GAMs are very general model structures, however, in this work $e^{i}: \mathbb{R}^{n_i} \times \mathbb{R}^{m_i} \mapsto \mathbb{R}^{n_i}$ and $f^{ij}: \mathbb{R}^{n_j} \times \mathbb{R}^{m_i} \mapsto \mathbb{R}^{n_i}$ will be polynomials in their respective variables, }
%linearly parametrized by $\theta_e^{ij}$ and $\theta_f^{ij}$ respectively.
	
		We define two partitions of the model parameters; the sets of upstream and downstream parameters. These are denoted $\upstream{\theta} = \{\theta^{i}_e, \theta^{ij}_f| j \in \upstream{\Nodes}\}$ and  $\downstream{\theta} = \{\theta^{i}_e, \theta^{ji}_f| j \in \downstream{\Nodes}\}$ respectively. Objective functions, constraints and optimization problems are called \textit{upstream-separable} or \textit{downstream-separable} if they are separable with respect to these partitions. Upstream and downstream separable optimization problems are closely related to the column-wise and row-wise separable optimization problems used in \cite{wang2018separable}.

		For the parametrization \eqref{eq:implicit model}, the differential dynamics have a sparsity pattern determined by the network topology. In particular, the  $(i,k)^{th}$ block of $F$ is:
		\begin{gather*}
		%		E^{ik} = \parDiff{}{x^k} e^{i}(x^i, u^i) = \begin{cases}
		%		\parDiff{e^{i}}{x^i}, & i=k \\ 0, & i\neq k
		%		\end{cases} \\
		F^{ik} = \parDiff{}{x^k} \sum_{j\in \upstream{\Nodes}} f^{ij}(x^j, u^j) = \begin{cases}
		\parDiff{f^{ik}}{x^k}, & k \in \upstream{\Nodes} \\ 0, & k \notin \upstream{\Nodes}
		\end{cases}.  
		\end{gather*}
		and $E$ is block diagonal.
		This means $F^{ik}$ depends only on parameters $\theta_f^{ik}$ and the block $E^{ii}$ depends only on $\theta_e^{i}$. As each block  of $E$ and $F$ has an independent parametrization, functions of disjoint sets of elements of $E$ or $F$ will be separable.
%		Consequently, applying Theorem \ref{thm: L1 contracting models} to the models \eqref{eq:implicit model} results in a set of downstream separable constraints.

	\subsection{Convex Bounds for Equation Error}	\label{sec:LREE}
	
		In Section \ref{sec:model structure} we propose a convex set of implicit models. However, this approach shifts the convexity problem from the model set to the objective function as equation error \eqref{eq:explicit_equation_error}, s.t. $a =  e^{-1} \circ f$, is no longer convex in the model parameters.
		
		One approach might be to minimize the implicit equation error 
		\begin{equation}\label{eq:implicit equation error}
		J_{iee} = \sum_{t=1}^{T-1} | e(\tilde{x}_{t+1}, \tilde{u}_{t+1}) - f(\tilde{x}_{t}, \tilde{u}_{t})|^2
		\end{equation} 
		as a surrogate for equation error. This approach however, strongly biases the resulting model and leads to poor performance \cite{umenberger2019convex}.  Instead we use the convex upper bound for equation error proposed in \cite{umenberger2019convex}, which is based on Lagrangian relaxation.

		%	In this work, we will fit models by minimizing an upper bound on equation error. 
	%	Utilizing Lagrangian Relaxation and the implicit model structure \eqref{eq:implicit model}, the search for a model and stability certificate can be rendered convex. 
	%	We will provide a brief overview of the cost function developed in \cite{umenberger2019convex}, referred to as Lagrangian Relaxation of Equation Error (LREE). 
	The least-squares equation error \eqref{eq:explicit_equation_error} for the implicit model \eqref{eq:global_implicit_system} is: 
	\begin{alignat}{4}
	&  \min_{\theta, x_2,...,x_{T}} & \quad &  J_{ee}(\theta) = \sum_{t=1}^{T-1} |x_{t+1} - \tilde{x}_{t+1}|^2  \label{eq:LREEOpt}\\
	&  \ \quad\centering s.t.                  &		  &	  e(x_{t+1}, \tilde{u}_{t+1}) = f(\tilde{x}_{t} , \tilde{u}_{t}), \quad \forall t = 1, ..., T-1. \nonumber 
	\end{alignat}
	Note that this problem is not jointly convex in $x_{t+1}$ and $\theta$. The following convex upper bound was proposed in \cite{umenberger2019convex}:	
	\begin{multline} \label{eq:Lagrangian Relaxation}
	J_{ee} \leq \hat{J}_{ee}({\theta}) = \sum_{t=1}^{T-1} \sup_{x_{t+1}} \bigg\{ |x_{t+1} - \tilde{x}_{t+1}|^2 \\ - 2 \lambda(x_{t+1})^\top (  e(x_{t+1},\tilde{u}_{t+1}) - f(\tilde{x}_{t} , \tilde{u}_{t})) \bigg\},
	\end{multline}	
	where $\lambda_{t}(x_{t+1}) = x_{t+1} - \tilde{x}_{t+1}$ is a Lagrange multiplier. The  function \eqref{eq:Lagrangian Relaxation} is convex in $\theta$ as it is the supremum of an infinite family of convex functions \cite{tobenkin2017convex}. 
	
	For our parametrization \eqref{eq:implicit model}, $E$ is block diagonal which then implies that \eqref{eq:Lagrangian Relaxation} is upstream separable so it can be written as 
	\begin{gather}\label{eq:Distributed LREE}
	\hat{J}_{ee}(\theta) = \sum_{i=1}^{N} \hat{J}_{ee}^i(\upstream{\theta}), 
	\end{gather}
	where
	\begin{multline*}
	\hat{J}^i_{ee}({\upstream{\theta}}) = \sum_{t=1}^{T-1} \sup_{x^i_t} \bigg\{ |x^i_{t+1} - \tilde{x}^i_{t+1}|^2 \\ - 2 (x^i_{t+1} - \tilde{x}^i_{t+1})^\top \bigg( e^i(x^i_{t+1},\tilde{u}^i_{t+1}) - \sum_{j \in \upstream{\Nodes}} f^{ij}(\tilde{x}^j_{t} , \tilde{u}^j_{t})\bigg) \bigg\}.
	\end{multline*}	
	
	The evaluation of $\hat{J}^i_{ee}$ is not trivial as it involves the calculation of the supremum of a non-linear multivariate function. In this work we linearise \eqref{eq:Distributed LREE} with respect to $x^i_t$ and solved for the supremum of the resulting concave quadratic function, giving:
	
	\begin{equation} \label{eq:linearized lree sup}
	\hat{J}^i_{ee}(\upstream{\theta}) \approx \lree^i(\upstream{\theta}) = \sum_{t=1}^{T-1} {\epsilon^i_t}^\top (E^i_t + {E^i_t}^\top - I)^{-1} \epsilon^i_t,
	\end{equation}
	where $\epsilon^i_t =  e^i(x^i_{t+1},\tilde{u}^i_{t+1}) - \sum_{j \in \upstream{\Nodes}} f^{ij}(\tilde{x}^j_{t} , \tilde{u}^j_{t})$ is the implicit equation error and  $E^i(x^i, u^i) = {\partial e^i}/{\partial x^i}$ and $E^i_t = E^i(\tilde{x}^i_t, \tilde{u}^i_t)$.
	The cost function \eqref{eq:linearized lree sup} can be optimized via a semidefinite program. Alternative methods for minimizing LREE can also be found in \cite{umenberger2019convex}.

	\subsection{Alternating Directions Method of Multipliers (ADMM) \label{sec:ADMM}}
	
%	We propose to use ADMM to solve problem \eqref{eq:Lagrangian Relaxation} subject to the stability constraint \eqref{eq:l1 contraction condition} and monotonicity constraint \eqref{eq:monotonicity}.
	In Section \ref{sec:model structure} we introduced a model set for which the constraints in Theorem \ref{thm: L1 contracting models} are downstream separable and in Section \ref{sec:LREE} we introduced an upstream separable objective function. Note however, that the constraints and objective are not jointly separable with respect to the same partition.  We use ADMM to solve this problem.

	We \tb{now } develop the algorithm for the case where \eqref{eq:global_implicit_system} is well-posed, monotone and contracting, however, a parallel construction without monotonicity or contraction constraints introduces no additional complexity. Consider the following set of parameters
	\begin{equation}\label{eq:theta_ml1}
		\mc{} = \{\theta ~ | ~ \eqref{eq:well-posed}, \eqref{eq:monotonicity}, \eqref{eq:origin_equilibria}, \eqref{eq:l1 contraction condition}\}.
	\end{equation}
	Applying  ADMM as discussed in Section \ref{sec:separability} to the problem $\min_{\theta \in \mc{}}  \hat{J}_{ee}$ gives the following iteration scheme
	for iteration $k$:
	\begin{flalign} 
	&\theta(k+1) = \argmin_{\theta}  {\hat{J}_{ee}}(\theta) + \frac{\rho}{2}||\theta  - \phi(k) + v(k)||^2, \label{ADMM iterates 1.1} &\\
	&	\phi(k+1) = \argmin_{\phi \in \mc{}}  \frac{\rho}{2}||\theta(k+1) - \phi -  v(k)||^2,  ~\label{ADMM iterates 1.2}& \\
	&	v(k+1) = v(k) - \theta(k+1) + \phi(k+1). & \label{ADMM iterates 1.3} 
	\end{flalign}
	for $\rho > 0$.	
	
	 When using a GAM structure \eqref{eq:implicit model}, we have the following result:
	\begin{proposition} \label{thm:admm_is_separable}
		For the model structure \eqref{eq:implicit model}, the ADMM iteration \eqref{ADMM iterates 1.1} separates into $N$ upstream-separable optimization problems of the form \eqref{ADMM iterates 2.1} and the ADMM iteration \eqref{ADMM iterates 1.2} separates into $N$ downstream-separable optimization problems of the form \eqref{ADMM iterates 2.2}.
	\end{proposition}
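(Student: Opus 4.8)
The plan is to reduce the proposition to two elementary observations: that the quadratic (proximal) penalty in each ADMM step decouples across \emph{any} partition of the coordinates of $\theta$, and that the remaining term in each step is separable with respect to exactly one of the two partitions from Section~\ref{sec:model structure}. To set this up, I would first confirm that both $\upstream{\theta}=\{\theta_e^i,\theta_f^{ij}\mid j\in\upstream{\Nodes}\}$ and $\downstream{\theta}=\{\theta_e^i,\theta_f^{ji}\mid j\in\downstream{\Nodes}\}$ are genuine partitions of $\theta$: each block $\theta_e^i$ occurs in exactly one upstream block and exactly one downstream block, while each block $\theta_f^{ij}$ (the parameters of $f^{ij}$, with $j\in\upstream{\Nodes}$) occurs in the upstream block of node $i$ and in the downstream block of node $j$, and nowhere else. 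Consequently the two partitions cover all coordinates and differ only in how they are grouped; this is precisely why the objective and the feasible set are not jointly separable and ADMM is needed.

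For the $\theta$-update \eqref{ADMM iterates 1.1}, I would note that with $\phi(k)$ and $v(k)$ held fixed, $\tfrac{\rho}{2}\|\theta-\phi(k)+v(k)\|^2$ is a sum of squares of individual coordinates of $\theta$ and is therefore additively separable with respect to any partition, in particular the upstream one, equalling $\sum_{i=1}^N \tfrac{\rho}{2}\|\upstream{\theta}-\upstream{\phi}(k)+\upstream{v}(k)\|^2$. Combined with $\hat{J}_{ee}(\theta)=\sum_{i=1}^N \hat{J}_{ee}^i(\upstream{\theta})$ from \eqref{eq:Distributed LREE} and the fact that \eqref{ADMM iterates 1.1} is unconstrained, the entire objective is a sum of $N$ terms, the $i$-th depending only on $\upstream{\theta}$. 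Hence the minimization splits into $N$ independent upstream-separable problems of the form \eqref{ADMM iterates 2.1}.

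For the $\phi$-update \eqref{ADMM iterates 1.2}, the same argument gives $\tfrac{\rho}{2}\|\theta(k+1)-\phi-v(k)\|^2=\sum_{i=1}^N \tfrac{\rho}{2}\|\downstream{\theta}(k+1)-\downstream{\phi}-\downstream{v}(k)\|^2$, now grouped along the downstream partition. It then remains to write the indicator of $\mc{}$ as $\mathcal{I}_{\mc{}}(\phi)=\sum_{i=1}^N \mathcal{I}^i(\downstream{\phi})$, where $\mathcal{I}^i$ depends only on the downstream block of node $i$. Here I would use that $E$ is block diagonal and that the $(j,i)$ block of $F$ (and of $K$) depends only on $\theta_f^{ji}$: the well-posedness condition \eqref{eq:well-posed} reduces to $E^{ii}+(E^{ii})^\top\succ\epsilon I$ (a function of $\theta_e^i$); the monotonicity conditions \eqref{eq:monotonicity} reduce, block by block, to $E^{ii}\in\mathbb{M}^{n_i}$ together with the column-wise sign conditions $F^{ji}\ge 0$ and $K^{ji}\ge 0$ for $j\in\downstream{\Nodes}$ (functions of $\theta_f^{ji}$); and the $\ell_1$ contraction condition \eqref{eq:l1 contraction condition}, being the column-sum inequality $\mathbf{1}^\top(\alpha E-F)\ge 0$, likewise groups by column $i$ into a condition of the form $\alpha\,\mathbf{1}^\top E^{ii}-\sum_{j\in\downstream{\Nodes}}\mathbf{1}^\top F^{ji}\ge 0$. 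Every constraint in the $i$-th group involves only $\{\theta_e^i,\theta_f^{ji}\mid j\in\downstream{\Nodes}\}=\downstream{\theta}$, so $\mc{}$ is downstream separable and \eqref{ADMM iterates 1.2} decomposes into $N$ downstream-separable problems of the form \eqref{ADMM iterates 2.2}.

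I expect the last verification to be the main obstacle, and within it the positivity constraint \eqref{eq:origin_equilibria} is the delicate case: the node-$i$ relation $e^i(0)=\sum_{j\in\upstream{\Nodes}}f^{ij}(0,0)$ ties $\theta_e^i$ to the \emph{upstream} blocks $\theta_f^{ij}$, so as written it is upstream- rather than downstream-separable. I would handle this by enforcing the stronger sufficient condition that $e^i$ and each $f^{ij}$ vanish individually at the origin; then $e^i(0)=0$ constrains only $\theta_e^i$ and $f^{ij}(0,0)=0$ constrains only $\theta_f^{ij}$, the latter lying in the downstream block of node $j$, so downstream separability is preserved while \eqref{eq:origin_equilibria} still holds. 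With $\mc{}$ confirmed downstream separable, the two decompositions above establish the proposition.
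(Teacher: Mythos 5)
Your proof is correct and follows essentially the same route as the paper's: the proximal term splits across any partition of the coordinates, the LREE objective is upstream separable by \eqref{eq:Distributed LREE}, and the indicator of $\mc{}$ decomposes column-wise (i.e., by downstream blocks) using the block-diagonality of $E$ and the independent parametrization of the blocks of $F$. One point in your favour: you correctly flag that the positivity constraint \eqref{eq:origin_equilibria}, which at node $i$ reads $e^i(0) = \sum_{j\in\upstream{\Nodes}} f^{ij}(0,0)$ and hence couples $\theta_e^i$ to the \emph{upstream} blocks $\theta_f^{ij}$, is not downstream separable as written --- the paper's own proof silently omits this constraint from its decomposition \eqref{eq:theta_l1_constraint} of $\mathcal{I}_{\mc{}}$ --- and your fix of imposing $e^i(0)=0$ and $f^{ij}(0,0)=0$ individually is a sound (if slightly stronger) way to restore separability.
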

	\begin{proof}
		See Appendix \ref{appendix:admm_is_separable}.
	\end{proof}

In particular, the ADMM approach corresponds to performing the following iterations locally at each node $i=1,...,N$:
	\begin{align}
	\upstream{\theta}(k+1) & = \begin{multlined}[t]
	\argmin_{\upstream{\theta}} \hat{J}^i_{ee}(\upstream{\theta})  \\ + \frac{\rho}{2}||\upstream{\theta} - \upstream{\phi}(k)  + \upstream{v}(k)||^2, 
	\end{multlined} \label{ADMM iterates 2.1}    \\
	\downstream{\phi}(k+1) & = 
	\begin{multlined}[t]
	\argmin_{\downstream{\phi}} 
	\mathcal{I}_{\mc{}}(\downstream{\phi})   \\ +  \frac{\rho}{2}||\downstream{\theta}({k+1}) - \downstream{\phi}  + \downstream{v}(k)||^2,
	\end{multlined} \label{ADMM iterates 2.2} \\
	\upstream{v}(k+1) & = 
	\upstream{v}(k) - \upstream{\theta}(k+1) + \upstream{\phi}(k+1). \label{ADMM iterates 2.3}
	\end{align}

%	\begin{align} \label{eq: Monotone ADMM} 
%	\min_{\theta, \phi}\ \  &  \hat{J}_{ee}(\theta) + \mathcal{I}_{\mc{}}(\phi), \\
%	s.t. \ \ &  \theta - \phi = 0, \nonumber  \nonumber
%	\end{align}
%	where $\mathcal{I}_{\mc{}}(\phi)$ is the indicator function for the set $\mc{}$, i.e. the function that returns $0$ for $\phi \in \mc{}$ and returns infinity for $\phi \notin \mc{}$.
%	Applying ADMM to problems of the form \eqref{eq: Monotone ADMM} gives the following iteration scheme \cite{boyd2011distributed}:
%	
%	\begin{flalign} 
%	&\theta(k+1) = \argmin_\theta  \hat{J}_{ee}(\theta) + \frac{\rho}{2}||\theta  - \phi(k) + v(k)||^2, \label{ADMM iterates 1.1} &\\
%	&	\phi(k+1) = \argmin_{\phi\in \mc{}}  \frac{\rho}{2}||\theta(k+1) - \phi -  v(k)||^2,  ~\label{ADMM iterates 1.2}& \\
%	&	v(k+1) = v(k) - \theta(k+1) + \phi(k+1). & \label{ADMM iterates 1.3} 
%	\end{flalign}

%	As the steps \eqref{ADMM iterates 1.1}, \eqref{ADMM iterates 1.2}, \eqref{ADMM iterates 1.3} are all separable, they can be solved in parallel by the following optimization problems for nodes $i = 1,...,N$:
	
%	\maybe{discuss communication in more detail.}
	The distributed algorithm is listed in Algorithm \ref{alg:admm}. The steps \eqref{ADMM iterates 2.1} and \eqref{ADMM iterates 2.2} require access to the upstream and downstream parameters respectively. These can be solved by the nodes in the graph, however, communication between both upstream and downstream parameters is necessary between steps.
	The update \eqref{ADMM iterates 2.3} is trivially separable and can be solved as either an upstream or downstream separable problem.

	\begin{algorithm}
		\SetAlgoLined
		\KwResult{ $\phi$}
		Initialize $\rho>0$\;
		Initialize: 
		${\theta}(0)$, ${\phi}(0)$, ${v}(0)$\;
		
		\For{ $k = 1,...$}{
			\For{ $i = 1,...,N$}{
				Get: $\{\breve{x}^i_t\}_{t=1}^T$\;
				
				Compute $\upstream{\theta}(k+1)$ using \eqref{ADMM iterates 2.1}\;
				Send $\upstream{\theta}(k+1)$ to upstream neighbours\;
				
				Compute $\downstream{\phi}(k+1)$ using \eqref{ADMM iterates 2.2}\;
				Send $\downstream{\phi}(k+1)$ to downstream neighbours\;
				
				Compute $\upstream{v}$ using \eqref{ADMM iterates 2.3}\;
				Send $\upstream{v}(k+1)$ to upstream neighbours\;
			}
		}
		\caption{\label{alg:admm}Distributed Algorithm}
	\end{algorithm}	
	
	%\begin{algorithm}
	%	\SetAlgoLined
	%	\KwResult{ $\phi$}
	%	Initialize $\rho$\;
	%	Initialize: 
	%	${\theta}(0)$, ${\phi}(0)$, ${v}(0)$\;
	%	\For{ $i = 1,...,N$}{
	%		Get: $\{\breve{x}^i_t\}_{t=1}^T$\;
	%		
	%		\For{ $k = 1,...$}{
	%		Compute $\upstream{\theta}(k+1)$ using \eqref{ADMM iterates 2.1}\;
	%		\ForEach{$j \in \upstream{\Nodes}$}{
	%			Send $\theta^{ij}_e(k+1)$ and $\theta^{ij}_f(k+1)$ to agent $j$\;
	%		}
	%	
	%		Compute $\downstream{\phi}(k+1)$ using \eqref{ADMM iterates 2.2}\;
	%		\ForEach{$j \in \downstream{\Nodes}$}{
	%			Send $\phi^{ji}_e(k+1)$ and $\phi^{ji}_f(k+1)$ to agent $j$\;
	%		}
	%		Compute $\upstream{v}$ using \eqref{ADMM iterates 2.3}\;
	%		\ForEach{$j \in \upstream{\Nodes}$}{
	%			Send $v^{ij}_e(k+1)$ and $v^{ij}_f(k+1)$ to agent $j$\;
	%		}
	%		}
	%	}
	%	\caption{Distributed Algorithm}
	%\end{algorithm}		
	
	Termination of ADMM after a finite number of iterations means that the two parameter vectors $\theta$ and $\phi$ will disagree. For this reason, we take $\phi$ as the solution to ensure that the well-posedness, monotonicity and contraction constraints \eqref{eq:well-posed}, \eqref{eq:monotonicity}, \eqref{eq:l1 contraction condition} are satisfied.

	\section{Discussion} \label{sec:Discussion}
	
	\subsection{Conservatism of the Separable Model Structure} 	\label{sec:Discussion-model set}
	%	A natural question to ask is: how flexible is the set of models described in Theorem \ref{thm: L1 contracting models}? \tb{While an exact answer to this question is extremely difficult, there are a number of common model classes that are exactly(?) representable.}
	
	We have proposed searching over the model set \eqref{eq:implicit model} with $\theta \in \Theta_{m\ell_1}$ \eqref{eq:theta_ml1}, and it is important to understand which systems may fall into this model set. A particular question of interest is whether there are contracting and monotone systems which \textit{cannot} be represented by this structure, and there are two main reasons why this may occur: the separable structure of the model \eqref{eq:implicit model}, and the assumption of a separable contraction metric in condition \eqref{eq:l1 contraction condition}.

		An exact characterization of the functions functions that be approximated via the GAM structure \eqref{eq:implicit model} is difficult to give, however, they have widely applied in statistical modelling, see \cite{hastie1990generalized} for details. Note that while the functions in the implicit system \eqref{eq:implicit model} are additive, the resulting explicit system \eqref{eq:explicit system} may not be. For example, the scalar functions $e(x) = \sqrt{x}$ and $f(x, y) = (x + y)$. Both $e$ and $f$ are additive; however, the function $e^{-1}\circ f(x,y) = {x}^2 + 2xy + y^2$ is not. 
		
		Conservatism  may also be introduced by the assumption of a separable contraction metric. 
		For the case of linear positive systems, it is has been shown that the existence of a separable Lyapunov functions is both necessary and sufficient \cite{berman1994nonnegative}. This means that $\Theta_{m \ell_1}$ contains all positive linear systems \cite{umenberger2016scalable}:
		\begin{theorem} \label{thm:discrete_time_positive_linear}
			For the system \eqref{eq:implicit model}, if $e$ and $f$ are affine in $(x,u)$, then the model set characterised by \eqref{eq:well-posed}, \eqref{eq:monotonicity} and \eqref{eq:l1 contraction condition} is a parametrization of all stable, discrete-time, positive linear systems.
		\end{theorem}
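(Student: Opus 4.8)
The plan is to establish a set equality between the explicit systems $a = e^{-1}\circ f$ generated by the model set and the stable positive linear systems, proving each inclusion separately. Since $e$ and $f$ are affine, their Jacobians $E$, $F$ and $K$ are constant, so every ``for all $(x,u)$'' qualifier collapses to a single matrix condition, and the explicit system is the linear map $x_{t+1} = E^{-1}F\,x_t + E^{-1}K\,u_t =: Ax_t + Bu_t$.

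For the forward inclusion (soundness) I would simply invoke Theorem \ref{thm: L1 contracting models}: the defining conditions \eqref{eq:well-posed}, \eqref{eq:monotonicity} and \eqref{eq:l1 contraction condition} are exactly parts (a), (c) and (e), so any such model is well-posed, monotone and contracting. Well-posedness makes $E$ invertible; since $E \in \mathbb{M}^n$ is a non-singular M-matrix we have $E^{-1} \ge 0$ elementwise, and combined with $F \ge 0$, $K \ge 0$ this gives $A = E^{-1}F \ge 0$ and $B = E^{-1}K \ge 0$, i.e.\ the system is positive. Contraction of a linear system is equivalent to $\rho(A) < 1$ (because $x^a_t - x^b_t = A^t(x^a_0 - x^b_0)$), so the generated system is a stable positive linear system.

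The substantive direction is completeness: given any $A \ge 0$, $B \ge 0$ with $\rho(A) < 1$, I must exhibit feasible $E$, $F$, $K$ realising $A = E^{-1}F$, $B = E^{-1}K$. The key idea is to look for a \emph{diagonal} $E = \mathrm{diag}(v)$ with $v > 0$, which automatically lies in $\mathbb{M}^n$ and satisfies $E + E^\top = 2\,\mathrm{diag}(v) \succ 0$. Setting $F = \mathrm{diag}(v)\,A$ and $K = \mathrm{diag}(v)\,B$ makes $F, K \ge 0$ and reproduces $A, B$ exactly, so the only remaining condition \eqref{eq:l1 contraction condition} reads $\mathbf{1}^\top(\alpha E - F) = \alpha v^\top - v^\top A = (\alpha v - A^\top v)^\top \ge 0$, i.e.\ $A^\top v \le \alpha v$ elementwise. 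Thus everything reduces to finding a positive vector $v$ obeying $A^\top v \le \alpha v$.

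This last step is where the positivity and separability hypotheses do the work, and it is precisely the necessary-and-sufficient separable-Lyapunov characterisation of stability for positive systems from \cite{berman1994nonnegative}. Concretely I would fix any $\alpha \in (\rho(A), 1)$; then $\rho(A^\top/\alpha) < 1$, so $I - A^\top/\alpha$ is a non-singular M-matrix, which by definition admits a vector $v > 0$ with $(I - A^\top/\alpha)v > 0$, i.e.\ $A^\top v < \alpha v$. With this $v$ all constraints hold and the constructed model generates the prescribed system, completing the inclusion. I expect no genuine obstacle beyond correctly citing this Perron--Frobenius / M-matrix fact; the remaining verifications are routine linear algebra, and the diagonal choice of $E$ is exactly what keeps them elementary.
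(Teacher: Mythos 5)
Your proposal is correct and follows essentially the same route as the paper: sufficiency by invoking Theorem \ref{thm: L1 contracting models}, and necessity by taking $E$ diagonal ($E=\mathrm{diag}(v)$ with $v>0$), $F=EA$, and reducing the contraction constraint to the existence of a positive vector certifying a separable linear Lyapunov function, which the paper obtains from \cite{rantzer2011distributed} where you invoke the equivalent M-matrix/Perron--Frobenius characterisation. Your write-up is in fact slightly more complete than the paper's, since it also treats $K$ and $B$ and spells out why contraction of the affine model is equivalent to $\rho(A)<1$.
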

		\begin{proof}
			See Appendix \ref{appendix:thm2}.
		\end{proof}
		Things are more complicated for nonlinear monotone systems. Separable contraction metrics have been shown to exist for certain classes of monotone systems \cite{manchester2017existence} and separable weighted $\ell_1$ contraction metrics have been used for the analysis of monotone systems \cite{como2015throughput,coogan2019contractive}. For incrementally exponentially stable systems, it has been shown that the existence of weighted $\ell_1$ contraction metrics, are necessary and sufficient \cite{kawano2019contraction}, however the state-dependant weighting depends on the all system states and is therefore not separable in the sense we use. To the authors' knowledge, a complete characterisation of the class of contracting monotone systems that admit separable metrics is still an open problem.
		 
%		To the authors' knowledge, these properties have not been extended to the discrete time case.
		
%		An exact characterization of how restrictive the assumption of a separable contraction metric, is to the authors' knowledge an open question. \maybe{\cite{kawano2019contraction}  suggests separability is necessary and sufficient for incrementally exponentially stable systems?}

	\subsection{Consistency} \label{sec:Discussion-consistency}
	
		It has be previously noted that system identification approaches that guarantee stability lead to a bias towards systems that are too stable \cite{maciejowski1995guaranteed,lacy2003subspace,manchester2012stable}. Empirical evidence suggests that for methods based on Lagrangian relaxation \cite{umenberger2018specialized,umenberger2019convex} this bias is smaller.
		
		%	Consistency analysis is further complicated by the use of an implicit, redundant parametrization as there is a family of parameter values corresponding to a particular explicit model. 
		
		There are a number of situations that lend themselves towards consistent identification. Firstly, consider the situation where we have noiseless state and input measurements produced by a model with $\theta^* \in \Theta_{m \ell_1}$ such that $J_{ee}(\theta^*) = 0$. Then we also have $\lree(\theta)^* = 0$ so the bound is tight and LREE recovers the true minimizer of equation error.
		
		Now, consider the situation where the unconstrained minimizer of equation error \eqref{eq:explicit_equation_error}, is a monotone, additive function that is contracting in the identity metric. That is, for the function $a_{\phi^*}(x,u)$ where $\phi^* = \argmin J_{ee}(\phi)$, the following hold:
		\begin{enumerate}
			\item $a_{\phi^*}(x,u)$ is additive so that \eqref{eq:explicit system} can be written as $a^i(x, u) = \sum_{j \in \upstream{\Nodes}}a^{ij}(x^j, u^j)$, \label{ass:additive}
			\item  $\bm{1}^\top(\alpha I - A(x,u))\geq 0$, \label{ass:contracting}
			\item $A(x,u)\geq0$. \label{ass:monotone}
		\end{enumerate} where $A = \parDiff{a}{x}$. Then, optimizing \eqref{eq:linearized lree sup} returns the same solution as the unconstrained least squares minimizer of $J_{ee}$.
		\begin{proposition}
			Consider models of the form \eqref{eq:implicit model} with $e_\theta(x,u) = Ex$ and $f_\theta(x,u) = a_{\phi^*}(x,u)$ for some $\theta$. If properties \ref{ass:additive}, \ref{ass:contracting}, \ref{ass:monotone} hold for $a_{\phi^*}(x,u)$ where $\phi^* = \argmin J_{ee}(\phi)$, then for $\theta^* = \underset{\theta \in \Theta_{m \ell_1}}{\argmin} ~ \lree(\theta)$, we have $a_{\phi^*}(x,u) = e_{\theta^*}^{-1}f_{\theta^*}(x,u)$.
		\end{proposition}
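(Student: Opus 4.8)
The plan is to exhibit a single feasible parameter $\theta_0\in\mc{}$ whose induced explicit model is exactly $a_{\phi^*}$, to show that at $\theta_0$ the relaxed cost \eqref{eq:linearized lree sup} collapses to the ordinary equation error and equals its global unconstrained minimum, and finally to use the upper-bound property of $\lree$ to certify that no other feasible parameter can do better.

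First I would construct $\theta_0$ by taking $E=I$, so that $e_{\theta_0}(x,u)=x$, and setting each component $f^{ij}_{\theta_0}=a^{ij}_{\phi^*}$; this is possible precisely because property \ref{ass:additive} guarantees $a_{\phi^*}$ has the additive form $a^i=\sum_{j\in\upstream{\Nodes}}a^{ij}$ compatible with the GAM structure \eqref{eq:implicit model}. By construction $e_{\theta_0}^{-1}\circ f_{\theta_0}=a_{\phi^*}$. Next I would verify $\theta_0\in\mc{}$: condition \eqref{eq:well-posed} holds because $E+E^\top=2I\succ\epsilon I$; the matrix $E=I$ is trivially a nonsingular $M$-matrix, so the $E$-part of \eqref{eq:monotonicity} holds; the derivative conditions $F=\partial a_{\phi^*}/\partial x\geq0$ and $K=\partial a_{\phi^*}/\partial u\geq0$ follow from monotonicity of $a_{\phi^*}$ (property \ref{ass:monotone} supplies the state part $A\geq0$); and the contraction condition \eqref{eq:l1 contraction condition}, which here reads $\mathbf{1}^\top(\alpha I - A)\geq0$, is exactly property \ref{ass:contracting}. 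If the positivity constraint \eqref{eq:origin_equilibria} is retained in $\mc{}$ it additionally forces $a_{\phi^*}(0,0)=0$, which I would carry as a mild standing assumption.

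The central computation is the tightness of the bound at $\theta_0$. Because $E^i_t=I$ for every $t$, the weighting in \eqref{eq:linearized lree sup} reduces to $(E^i_t+{E^i_t}^\top-I)^{-1}=I$, so $\lree(\theta_0)$ becomes the unweighted sum of squared implicit residuals $\sum_{i,t}|\epsilon^i_t|^2$; and since $e_{\theta_0}$ is the identity, $\epsilon^i_t=\tilde{x}^i_{t+1}-a^i_{\phi^*}(\tilde{x}_t,\tilde{u}_t)$ is the ordinary one-step prediction error. Hence $\lree(\theta_0)=J_{ee}(\phi^*)=\min_\phi J_{ee}(\phi)$, and I would stress that this equality is exact rather than approximate, because $e_{\theta_0}$ is affine and the linearization underlying \eqref{eq:linearized lree sup} is therefore exact.

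To finish, I would invoke the defining upper-bound property of the Lagrangian relaxation: for every well-posed $\theta$, $\lree(\theta)\geq J_{ee}(a_\theta)$ with $a_\theta=e_\theta^{-1}\circ f_\theta$, and since $a_\theta$ is itself an admissible explicit model, $J_{ee}(a_\theta)\geq\min_\phi J_{ee}(\phi)=J_{ee}(\phi^*)$. Combining, $\lree(\theta)\geq J_{ee}(\phi^*)=\lree(\theta_0)$ for all $\theta\in\mc{}$, so $\theta_0$ minimizes $\lree$ and $e_{\theta^*}^{-1}f_{\theta^*}=a_{\phi^*}$. I expect the last paragraph to be the main obstacle: I must make sure the inequality $\lree(\theta)\geq J_{ee}(a_\theta)$ really survives the linearization at competing parameters (it is exact when $e$ is affine, where \eqref{eq:linearized lree sup} introduces no gap, and is the design property of the relaxation more generally), and, to upgrade the conclusion from ``$\theta_0$ is a minimizer'' to ``every minimizer reproduces $a_{\phi^*}$'', I would either assume uniqueness of the unconstrained equation-error minimizer or state the result for the constructed optimal $\theta_0$.
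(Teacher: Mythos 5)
Your argument --- exhibit the feasible point $\theta_0$ with $E=I$ and $f_{\theta_0}=a_{\phi^*}$, show the relaxed cost is tight there and equals $\min_\phi J_{ee}(\phi)$, then use the upper-bound property of the relaxation to rule out any better feasible point --- is essentially the paper's own approach: the paper gives no in-text proof and simply defers to the linear-case argument of \cite[Sec.~IV, Prop.~1]{umenberger2019convex}, which proceeds exactly this way. The caveats you flag (needing $a_{\phi^*}(0,0)=0$ for the origin-equilibrium constraint, input-monotonicity for $K\ge 0$, uniqueness of the unconstrained equation-error minimizer, and the fact that the linearized bound is only guaranteed to dominate $J_{ee}$ when $e$ is affine) are genuine, but they are equally unaddressed by the paper.
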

		\begin{proof}
			Our proof mirrors that of \cite[Sec. IV Proposition 1]{umenberger2019convex}.
		\end{proof}

	\subsection{Iteration Complexity of Distributed Algorithm}
		In this section, we investigate the computational complexity of each step in the distributed algorithm.
		In general, the complexity depends on the model parametrization used, however, we limit our discussion to the case where the models are parametrized by polynomials and the constraints are enforced using sum of squares programming.

		The first step, \eqref{ADMM iterates 2.1}, is a semi-definite program and can be solved using standard solvers. If no structural properties are exploited, a primal-dual interior point method (IPM), would require $\mathcal{O}\left[\max \{ n_{\upstream{\theta}}^3,~ n_{\upstream{\theta}}  {{n}_i}^3 ,~ n_{\upstream{\theta}}^2  {n}_i^2 \}\right]$ operations per iteration per node \cite{liu2009interior}, where $n_{\upstream{\theta}}$ is the number of upstream free parameters .
	
	The second step, \eqref{ADMM iterates 2.2}, is a sum-of-squares problem that can solved as a semi-definite program. If $e$ and $f$ both have degree $2d$, then the size of Gram matrix corresponding to \eqref{eq:l1 contraction condition} for the additive model \eqref{eq:implicit model} is $ p = 1 + \sum_{j\in \downstream{\Nodes}} \left[ {n_j + m_i + d \choose d} - 1\right]$. Solving \eqref{ADMM iterates 2.2} using a primal-dual IPM requires approximately $\mathcal{O}\left[\max \{ n_{\downstream{\theta}}^3,~ n_{\downstream{\theta}} p^3 ,~ n_{\downstream{\theta}}^2  p^2 \}\right]$ operations per iteration per node \cite{liu2009interior}, where $n_{\downstream{\theta}}$ is the number of downstream free parameters.
	
	%	\tb{ Clearly introduce distributed computing scheme/ assumption }
	If  a local computational resource is associated with each node in the network, and the number of neighbours for each node satisfies a uniform bound, then the time taken for each iteration will not increase with the number of nodes. However, computation time will grow quickly with the number neighbours, the size of the local states and the degrees of the polynomials used in the model. 
%	In essence, the algorithm will work best on sparse systems.
	
	\subsection{Other Quality of Fit Criteria}
	
	Lagrangian relaxation of least-squares equation error was chosen as it is convex, upstream separable, \tb{quick to compute}, and leads to a simple implementation of ADMM. 
	\tb{Any method that treats neighbouring states as exogenous inputs will be upstream separable. However, any such approach will also be susceptible to instability due to the introduction of new feedback loops via the network topology, even if it guarantees stability of the local models. 
	Consequently,  one can similarly apply any convex quality of fit criteria such us convex upper bounds on simulation error \cite{tobenkin2017convex,umenberger2018specialized} and still guarantee convergence of ADMM. }
	Alternatively, a non-convex quality of fit criteria like simulation error can be used at the expense of ADMM's convergence guarantees.
	
	If a model structure does not permit distributed identification, the conditions proposed in Section \ref{sec:convex constraints} can still be used to ensure stability and/or monotonicity. Joint convexity of the model set and stability constraints is still an important as it simplifies constrained optimization allowing for the easy application of penalty, barrier or projected gradient methods \cite{revay2019contracting}.

	\section{Numerical Experiments \label{sec:results}}

	In this section we present numerical results exploring the scalability and identification performance the proposed approach. 

	This section is structured as follows:
	first, we look at the identification of positive linear systems, and explore the computational complexity of the $\ell_1$ and $\ell_2$ contraction conditions;
%	 and the convergence of the distributed algorithm;
	we then explore the consistency of fitting nonlinear models when the true system lies in the model set, essentially analysing the effect of convex bound on equation error;
	finally, we apply the method to the identification of a (simulated) nonlinear traffic network. The traffic network does not lie in the model set so only an approximate model can be identified. We explore the regularising effect of the model constraints and scalability of the method to large networks.

	Previous methods for the identification of models with stability guarantees have ensured contraction using a quadratic metric \cite{tobenkin2017convex,umenberger2018specialized,umenberger2019convex}. Contraction is implied by the following semidefinite constraint:
	\begin{gather} \label{eq:L2 Contraction}
	W(x,u,\theta) \succeq 0 \ \ \ \ \forall (x,u),\\
	W(x,u,\theta)=\begin{bmatrix}
	E(x,u) + E(x,u)^\top  - P - \eta I & F(x,u)^\top \\ F(x,u) & P
	\end{bmatrix} \nonumber
	\end{gather}
	where $P \in \mathbb{S}^{n\times n}, P \succ 0$, $\eta > 0$. We refer to \eqref{eq:L2 Contraction} as an $\ell_2$ contraction condition as it implies the contraction conditions \eqref{eq:contraction_metric} with  a state dependent weighted $\ell_2$ norm of the differentials $V = \delta_{x_t}^\top E(x_t, u_t)^\top P^{-1} E(x_t, u_t) \delta_{x_t}$.
	
	%	This can be interpreted as a contraction condition \cite{lohmiller1998contraction} with metric $ E^\top P^{-1}E $.
	
	We will make future reference to the following convex sets of parameters, in addition to $\theta_{ml_1}$ defined in \eqref{eq:theta_ml1}:
	\begin{gather*}
	~~	\Theta_{u} = \{ \theta ~|~ \eqref{eq:well-posed}, \eqref{eq:origin_equilibria} \}, ~ ~ \Theta_{m} = \{ \theta ~| ~ \eqref{eq:well-posed}, \eqref{eq:monotonicity}, \eqref{eq:origin_equilibria}\} \\ \Theta_{m \ell_2} = \{ \theta~ | ~ \eqref{eq:monotonicity}, \eqref{eq:origin_equilibria}, \eqref{eq:L2 Contraction} \}
	\end{gather*}
%	Note that the well-posedness of $\elltwo{}$ is enforced by upper left block of \eqref{eq:L2 Contraction} being positive definite.
	Here the subscripts refer to the following properties:
	\begin{itemize}
		\item $m \ell_1$ -  Monotone $ \ell_1$ contracting models i.e. $\theta \in \mc{}$,
		\item $m$ - Monotone models i.e. $\theta \in \m{}$,
		\item $u$ - Models that are not constrained to be contracting or monotone i.e. $\theta\in 
		\uncon{}$, 
		\item $m \ell_2$ - Models that are monotone and contracting in $ \ell_2$, i.e. $\theta \in \elltwo{} $,
		\end{itemize}
	All functions $e^{i},$ and $f^{ij}$ are polynomials in all monomials of their arguments up to a certain degree. 
	
	As a baseline for comparison, we will also compare to models denoted $Poly$, with explicit polyonomial models \eqref{eq:global system} fit by least-squares without any separable structure imposed. \tb{We will also compare to standard wavelet and sigmoid Nonlinear AutoRegressive with Exogenous input (NARX) models implemented as part of the Matlab system identification toolbox.}
	
	For the implicit models, the model class prefix is followed by the degrees of the polynomials in $e$ and $f$ in parenthesis. For example, the notation $u(3,5)$ refers to unconstrained models with $e$ having degree $3$ and $f$ having degree $5$.
	For the explicit polynomial models $Poly$, the degree used follows in parenthesis, so $Poly(5)$ are explicit polynomial models of degree 5 in all arguments. 
	
	\tb{
		The NARX models were fit at each node using the regressors $(\breve{x}^i_t, \breve{u}^i_t, \breve{u}^i_{t+1})$. 
		The wavelet NARX models were set to automatically choose the number of basis functions and the sigmoid NARX models were set to use $10$ basis functions.
		The focus for each model was set to produce the best performance. For the wavelet network, we used a focus on simulation and for the sigmoid network, we used a focus on prediction.
	}

	The constraints \eqref{eq:well-posed}, \eqref{eq:monotonicity}, \eqref{eq:origin_equilibria}, \eqref{eq:l1 contraction condition} and \eqref{eq:L2 Contraction} are enforced using sum of squares programming \cite{parrilo2003semidefinite}.	
	All programs are solved using the SDP solver MOSEK with the parser YALMIP \cite{lofberg2004yalmip} on a standard desktop computer (intel core i7, 16GB RAM).

	%	where $\tilde{x}$ and $x$ are the measured and simulated state variables respectively. 
	%	While our algorithm minimizes equation error, we use NSE to evaluate models as it provides a better  measure of long term model performance.

	\subsection{Identification of Linear Positive Systems}
	In this subsection we study the scalability of the proposed method for the identification of linear positive  systems. 
	
	We compare the computation time using the proposed $\ell_1$ contraction constraint to a previously proposed $\ell_2$ contraction constraint (i.e. quadratic Lyapunov function). 
	Note that for linear systems, the model sets $m\ell_1$ and $m\ell_2$ both are parameterizations of all stable positive linear systems so no difference in quality of fit is observed.
%	We then examine the rate convergence of the distributed algorithm.

%	We compare the proposed $ \ell_1$ contraction constraint \eqref{eq:l1 contraction condition} to the previously proposed $\ell_2$ contraction constraint \eqref{eq:L2 Contraction} used in \cite{tobenkin2017convex,umenberger2018specialized}.
%	Specifically, we compare the performance in terms of model fidelity (measured by NSE) and the computational complexity.
%	We will look at two case studies; randomly generated stable linear sysems and the nonlinear traffic model from Section \ref{sec:traffic simulation}.
	
	\subsubsection{Scalability of Separable Linear and Quadratic Metrics} \label{sec:ID_of_linear_system}
%	We illustrate the difference in scalability between the LMI parametrization $\elltwo{}$ and the polytopic parametrization $\mc{}$.
%	Specifically, we will compare the models $m \ell_1{(1,1)}$ and $m \ell_2{(1,1)}$.
	%	\begin{itemize}
	%%		\item $m\ell_1{(1,1)}$ -  where $E \in \mathbb{M}$, denoted $m\ell_1{(1,1)^{struc}}$
	%%		\item $m\ell_2{(1,1)}$ -  where $E\in \mathbb{M}$, denoted $m\ell_2{(1,1)^{struc}}$
	%		\item $m\ell_1{(1,1)}$ -  where $E\in \mathbb{D}$, denoted $m\ell_1{(1,1)^{diag}}$
	%		\item $m\ell_2{(1,1)}$ -  where $E\in \mathbb{D}$  is denoted $m\ell_2{(1,1)^{diag}}$
	%	\end{itemize}
	
	We illustrate the difference in scalability between the models $m \ell_1{(1,1)}$ and $m \ell_2{(1,1)}$.
	Each experimental trial consists of the following steps:
	\begin{enumerate}
		\item A stable positive system with state dimension $n_x$ is randomly generated using Matlab's \cfont{rand} function; $A\in\real^{n_x\times n_x}$ has a banded structure with band width equal to $9$. Stability was ensured by rescaling $A$ to have a spectral radius of $0.95$.
		\item The system is simulated for $T=10^4$ time steps; $\tilde{x}_{1:T}$ is obtained by adding white noise to the simulated states at SNR equal to 40dB. 
		\item This process is repeated 5 times for each $n_x$.
	\end{enumerate}

	\begin{figure}
		\centering
		\includegraphics[width=\linewidth, trim = {0cm, 0cm, 0.5cm, 0.5cm}, clip ]{./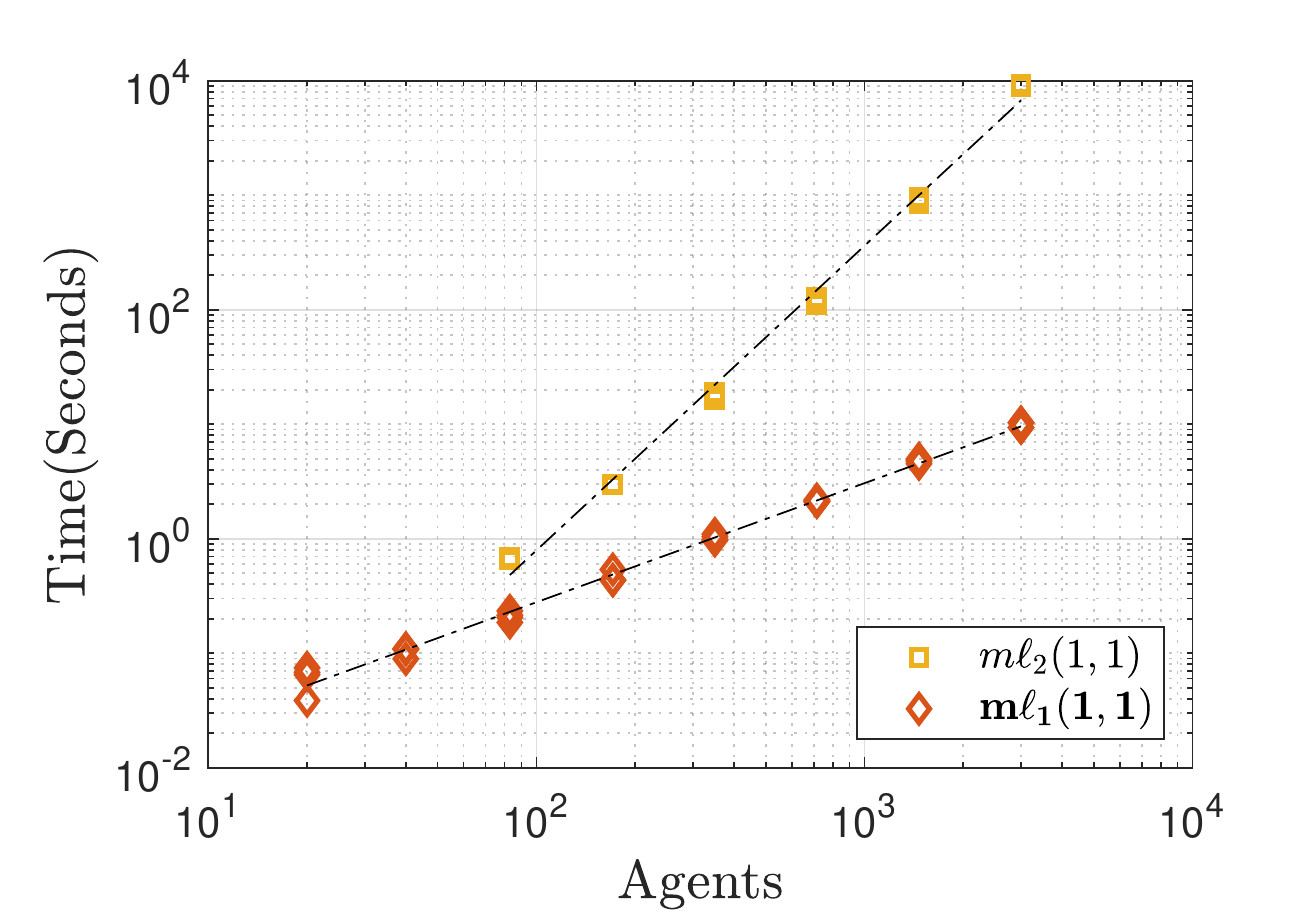}
		
		\caption{ \label{fig:linear runtimes} Computation time as function of system size. The slopes of the lines of best fit are: $m\ell_2(1,1)$ - 2.66, $m\ell_1(1,1)$ - 1.04 . }
	\end{figure}
	
	The time taken to solve each optimization problem is shown in Fig. \ref{fig:linear runtimes}. Here, we see a significant improvement in the computational complexity from approximately cubic growth for $m \ell_2$ to linear growth for $m \ell_1$. The networked approach allows us to solve stable identification problems with at least $3000$ states.
	
	Note that no explicit attempts to exploit the sparsity of the system were made; use of solvers and parsers designed to exploit sparsity could improve performance, especially for the SDPs associated with the LMI parametrization, e.g. \cite{andersen2010implementation}.

	\subsection{Identification of Nonlinear Models}
	In this section we study the consistency of fitting nonlinear implicit models via the LREE bound on equation error. In Section \ref{sec:Discussion-consistency} we saw that in the noiseless case, optimization of LREE will return the true model parameters. We will now explore the effect of introducing noise on the model estimates. 
	The experiments in this section can be seen to supplement those in \cite[Sec. IV]{umenberger2019convex} which studied the effects of noise and model stability on consistency in the linear setting.
	
	We generate models $a^*(x,u)$ by sampling a parameter vector $\theta$ and then projecting onto the set $\mc{}$. The models have degree 3, state size $n=2$ and $m=1$.
	We then generate training data with $T$ samples by randomly sampling $(\tilde{x}_t,\tilde{u}_t)$ from the uniform distribution on $[0, 1]$ and generated noisy measurements of $x_{t+1}$ by $\tilde{x}_{t+1} = a^*(\tilde{x}_t,\tilde{u}_t) + v_t$, where $v_t$ is normally distributed noise with a specified Signal to Noise Ratio (SNR). Models $a(x,u)$ are then trained by minimizing $\lree$ with $\theta \in \mc{}$ and performance measured using Normalized Equation Error (NEE):
	\begin{equation}\label{eq:NEE}
	\text{NEE} = \frac{|a(x,u) - a^*(x,u)|^2_2}{|a^*(x,u)|^2_2}
	\end{equation}
	where $a(x,u)$ is the identified dynamic model and $a^*(x,u)$ is the true 
	where $|f(x)|_2 = \int_{x\in \mathcal{D}} |f(x)|^2 dx$ is the sample estimate of the 2-norm of the function $f$. 
		
	\begin{figure}[]
		\centering
		\includegraphics[trim = {0cm 0.0cm 1cm 2cm}, clip,width=0.8\linewidth]{./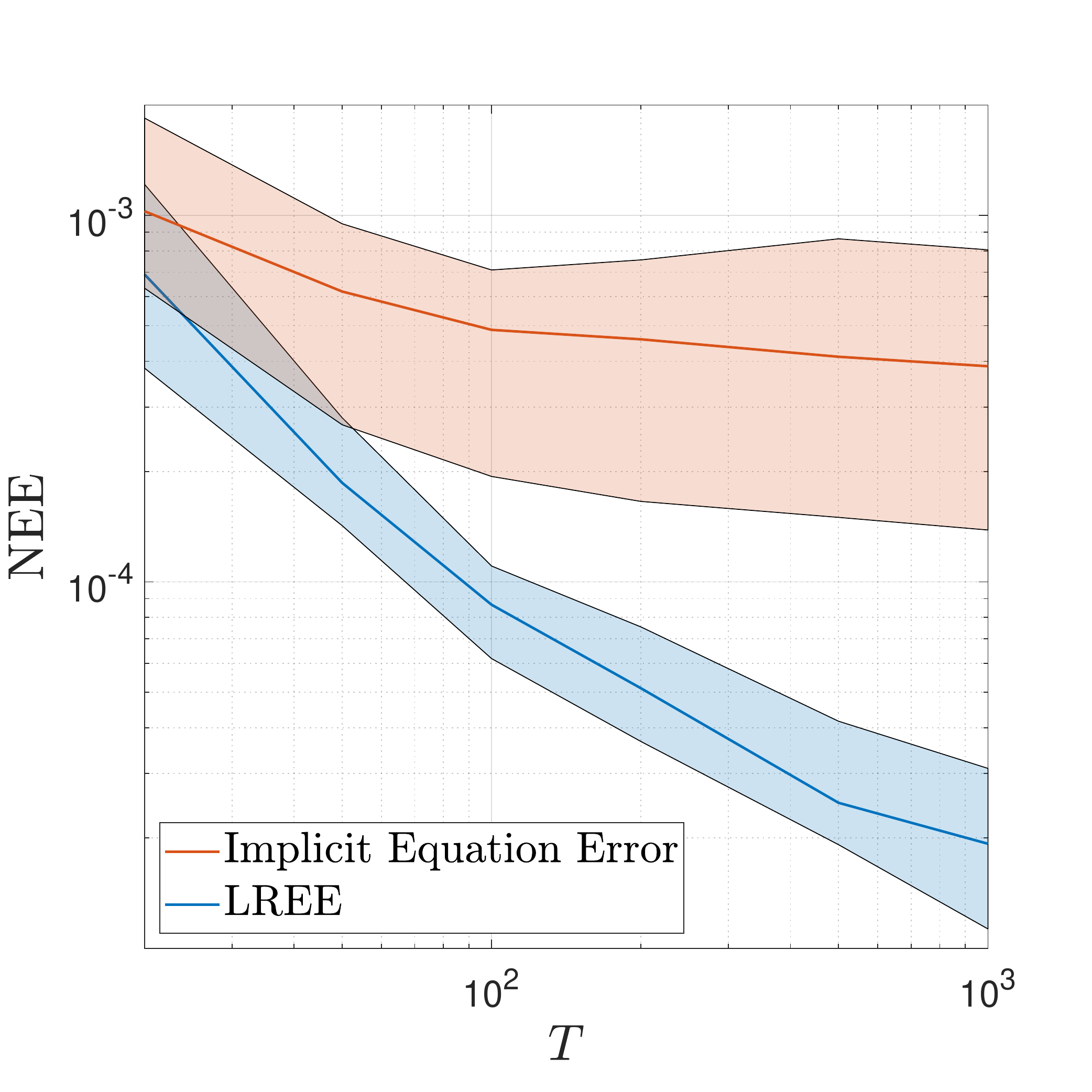}
		\caption{Comparison of implicit equation error and LREE: Normalized equation error versus number of training data points. The training data has gaussian noise with $SNR=30 \text{dB}$. For each method, the central line shows the median NEE for 50 model realizations and the shaded region shows the upper and lower quartiles.\label{fig:iee_vs_lree}}
	\end{figure}
	
	In Figure \ref{fig:iee_vs_lree}, we have plotted the NEE that results from fitting models from $m\ell_1(3,3)$ by optimizing LREE \eqref{eq:linearized lree sup} and implicit equation error \eqref{eq:implicit equation error}. 
	We can see that LREE provides a much better fit than implicit equation error, especially as the number of data points increases.

	\begin{figure}[]
		\centering
		\includegraphics[trim = {0cm 0.0cm 1cm 2cm}, clip,width=0.8\linewidth]{./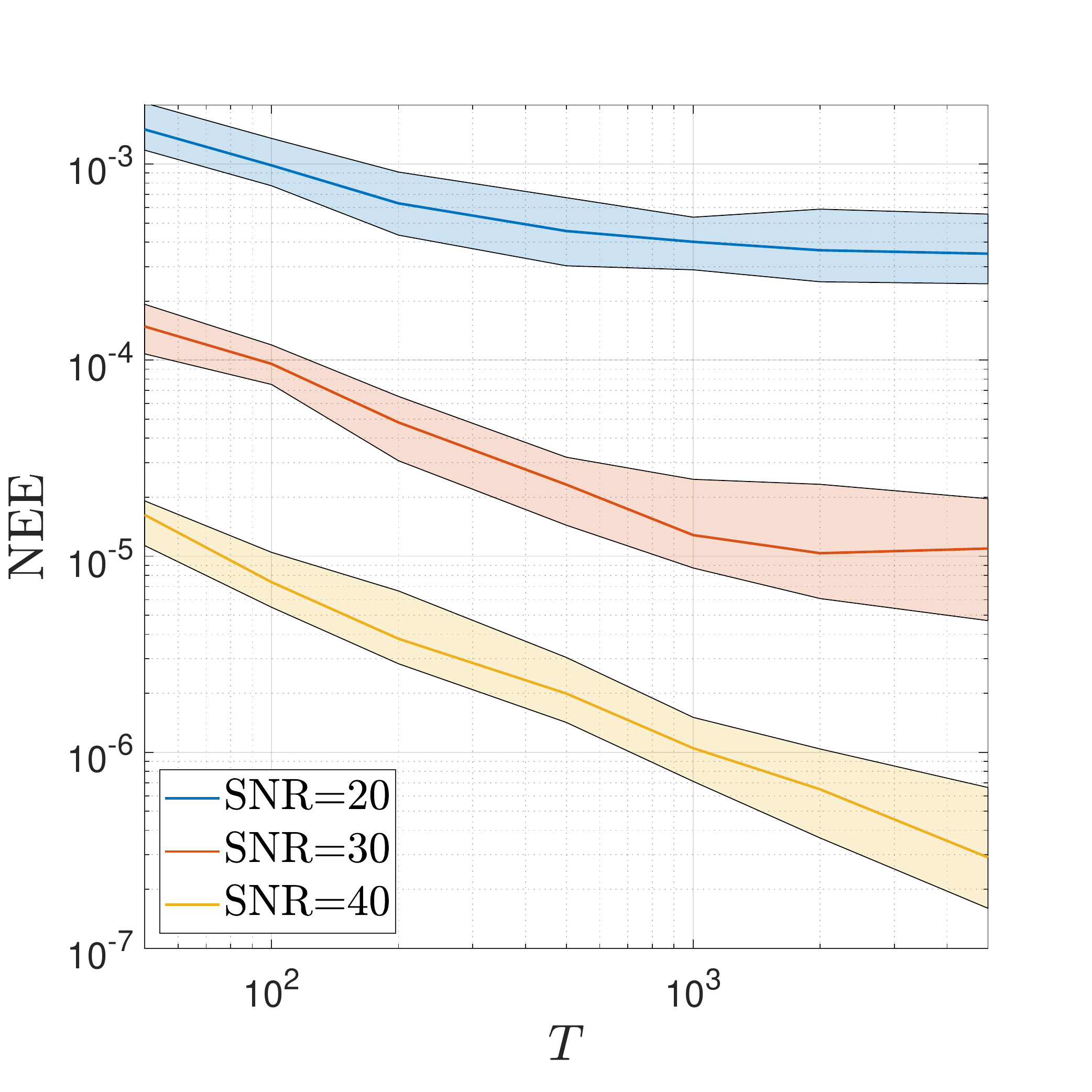}
		\caption{Normalized equation error versus number of training data points for three different SNRs. The central line shows the median NEE for 25 model realisations and the shaded region shows the upper and lower quartiles. The SNR is measured in decibels.\label{fig:lree_consistency}}
	\end{figure}
	
	To explore the effect of noise on the consistency of LREE, we have plotted NEE versus the size of the dataset for varying noise level (measured in decibels) in Figure \ref{fig:lree_consistency}.
	If we had a consistent estimator of the explicit model \eqref{eq:global system},  we would expect to see $\underset{T\rightarrow\infty}{\lim} NEE = 0$ with consistent slope for all SNR levels. What we in fact observe, however, is that in noisier conditions the NEE initially decreases and then plateaus at a certain level. 
	\tb{
		This phenomena can also be seen in \cite[Sec. IV]{Umenburger2018Convex}, where LREE produces models biased towards being too stable, even in the infinite data limit.
	}

	\subsection{Identification of Traffic Networks}
	
	In this section we examine a potential application of our approach, the identification of a traffic network. The dynamics of traffic networks are thought to be monotone when operating in the free flow regime \cite{lovisari2014stability}. Note that monotonicity of some traffic models is lost when certain nodes are congested \cite{como2016convexity}. 
	
	The data are generated using the model in \cite{lovisari2014stability}, which is not in the proposed model set. Hence this section provides a test of robustness of the proposed approach to modelling assumptions.
	
	For this application, we consider using equation error as a surrogate for simulation error. Model performance is therefore measured using Normalized Simulation Error (NSE):
	\begin{equation}
	\text{NSE} = \frac{\sum_{t}|x_t - \tilde{x}_t|^2}{\sum_{t}|\tilde{x}_t|^2},
	\end{equation}
	where $x_t$ are the simulated states. 
	
	We will first introduce the model, then study the effect of the model constraints by comparison to existing methods, and finally examine scalability to large networks.

	\subsubsection{ Simulation of a traffic network \label{sec:traffic simulation}}
	
%	Typical models for traffic networks derive their dynamics from the assumption that cars are conserved. 
	
The dynamics are simulated over a graph (e.g. Fig.  \ref{fig:planar_graph}), where, each node $i$ represents a road with state corresponding to the density of traffic on the road, denoted $\rho^i$. Nodes marked \textit{in} allow cars to flow into the network, and nodes marked \textit{out} allow cars to flow out of the network. Each edge $(i,j)$ is randomly assigned a turning preference denoted $R_{ij}$ such that $\sum_{i} R_{ij} = 1$ (this ensures that the total number of cars at each intersection is conserved). Each node $i$ has a capacity of $C_i = 1$. 
Vehicles transfer from roads $i$ to $j$ according to the routing policy,
$$
f_{i\rightarrow j}(\rho)=R_{ji}d_{i}(\rho^{i})\min\left\{1,{s_{j}(\rho_{j})\over\sum_{k\in \upstream{\Nodes}}R_{kj}d_{k}(\rho^{j})}\right\},
$$
where $d_i(\rho) = \min(10, \rho)$ and $s_i(\rho) = \max(2 C_i - \rho, 0)$ are monotone demand and supply curves for road $i$. The dynamics of the complete system are then found to be 
\begin{equation} \label{eq:traffic dynamics}
\dot{\rho}^{i} = f^i_{in} -  f^i_{out},
\end{equation}
where
$$
f^i_{in} = \begin{cases} 
u^i &, ~ i \in \text{in} \\
\sum_{j\in \upstream{\Nodes}} f_{j\rightarrow i} &, ~  i \notin \text{in}
\end{cases}
$$

$$
f^i_{out} = \begin{cases}
d_i(\rho_i) &, ~ i \in \text{out}\\ 
\sum_{j\in \downstream{\Nodes}} f_{i\rightarrow j} &, ~ i \notin \text{out}.
\end{cases}
$$

\begin{figure}
	\centering
	\includegraphics[trim = {2.5cm 9cm 3cm 9cm}, clip, width=0.9\linewidth]{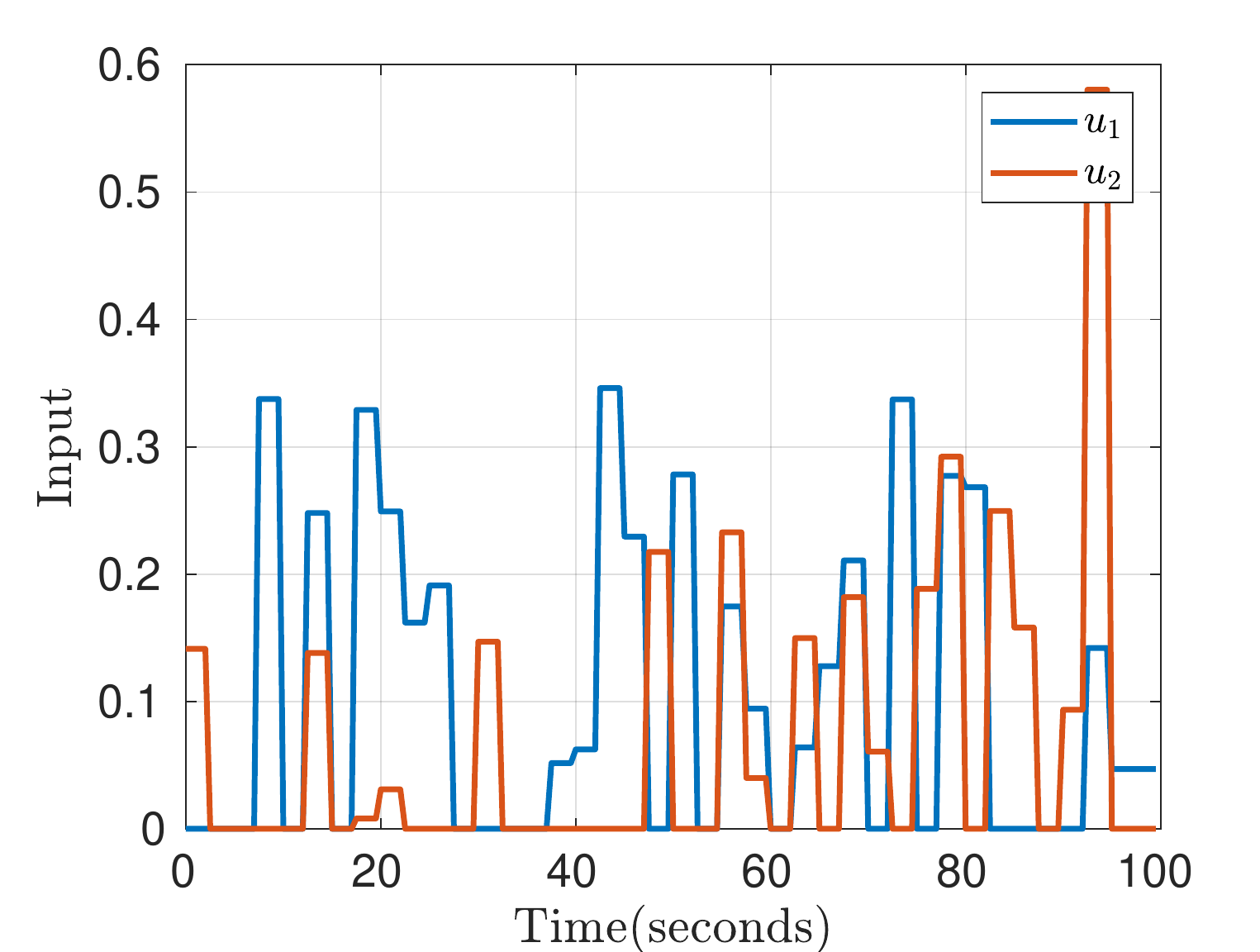}
	\caption{Example input signal to network $(\mu_u = 0, \sigma_u = 0.2)$.}
	\label{fig:network input}
\end{figure}
The input nodes $i \in \textit{in}$ take a time varying input $u^i$.
We use the following method to generate data sets of size $T$:
\begin{enumerate}[(i)]
	\item First, we generate an input signal for each $u^i$ of size $T$. This signal changes value every 5 seconds to a new value that is normally distributed with mean $\mu_u$ and standard deviation $\sigma_u$. Negative values of $u$ are set to zero. An example input signal is shown in Fig.  \ref{fig:network input}.	
	\item The dynamics \eqref{eq:traffic dynamics} are integrated over $t_{f}$ seconds.
	\item  A training set of size $T = 2t_{f}$ is generated by sampling every 0.5 seconds.	
\end{enumerate}

	\subsubsection{ Regularization Effect of Model Constraints\label{sec: results-constraints}}
	In this section we will explore the effects of introducing monotonicity, positivity, and contraction constraints.

	Introducing model constraints limits the expressivity of our model. Consequently, one might expect the estimator bias to increase and the variance to decrease \cite[Chapter 7]{friedman2001elements}.  Empirical evidence in this section suggests that a  judicious choice of constraints can reduce the variance with a minimal increase in bias.

	\begin{figure}
		\centering
		\includegraphics[trim = {2cm 16.5cm 2cm 3cm}, clip,width=0.8\linewidth]{./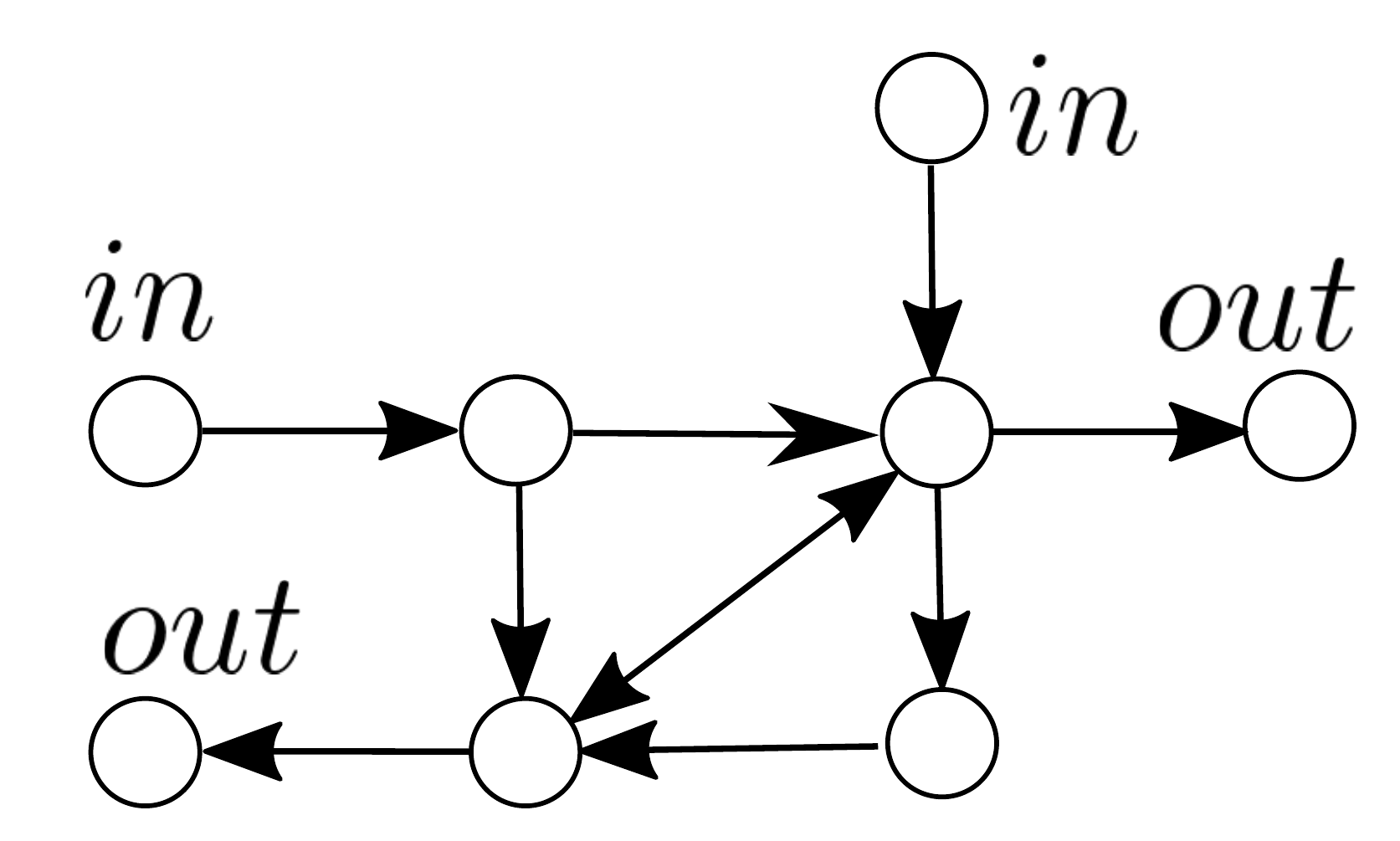}
		\caption{\label{fig:planar_graph}A small traffic network. Each node represents a road and each link represents an intersection. }
	\end{figure}
	
	Using the method outlined in Section \ref{sec:traffic simulation} for simulating a traffic network and the graph depicted in Fig.  \ref{fig:planar_graph},  we generate 100 different training sets of size $T=1000$ with $\mu_u = 0, \sigma_u = 0.2$ and then compare the results on three different validation sets. 
	The first validation set has inputs generated with parameters $\mu_u = 0, \sigma_u = 0.2$ (the same as the training set). The second and third validation sets have parameters $\mu_u = 0, \sigma_u = 0.3$ and $\mu_u = 0, \sigma_u = 0.4$ respectively. These are used to test the generalizability of our model to inputs outside the training set.
	
		\begin{figure}
		\centering
		\subfloat[ \label{fig:model constraint training NSE} Training set ($\sigma_u = 0.2, \mu_u = 0$) over 100 realizations.]
		{
			\includegraphics[trim = {0.5cm, 0cm, 1cm, 0cm}, clip, width =\linewidth]{./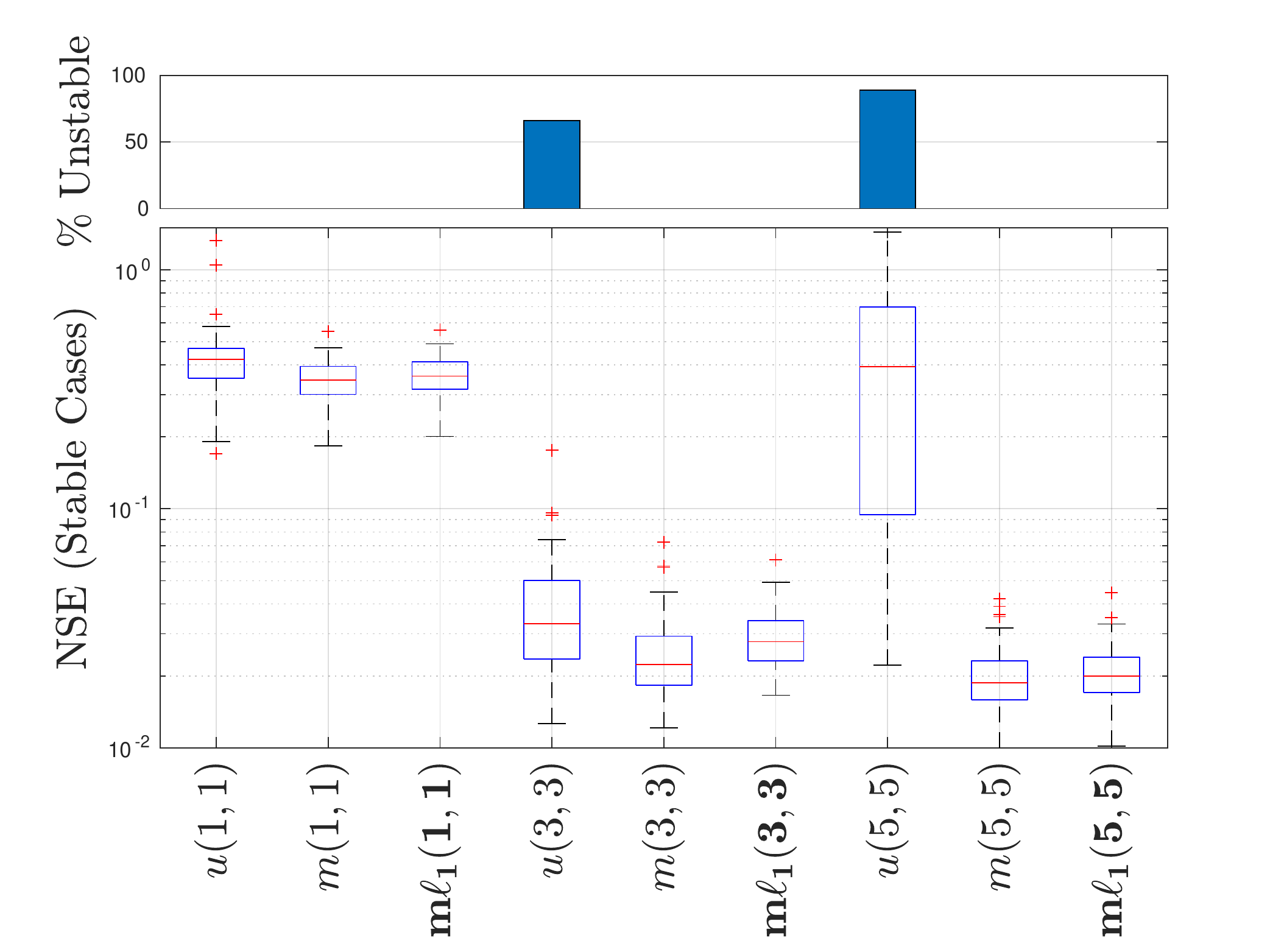}
		}
		
		\subfloat[ \label{fig:model constraint validation1 NSE} Validation set 1 ($\sigma_u = 0.2, \mu_u = 0$) over 100 realizations.]
		{
			\includegraphics[trim = {0.5cm, 0cm, 1cm, 0cm}, clip, width =\linewidth]{./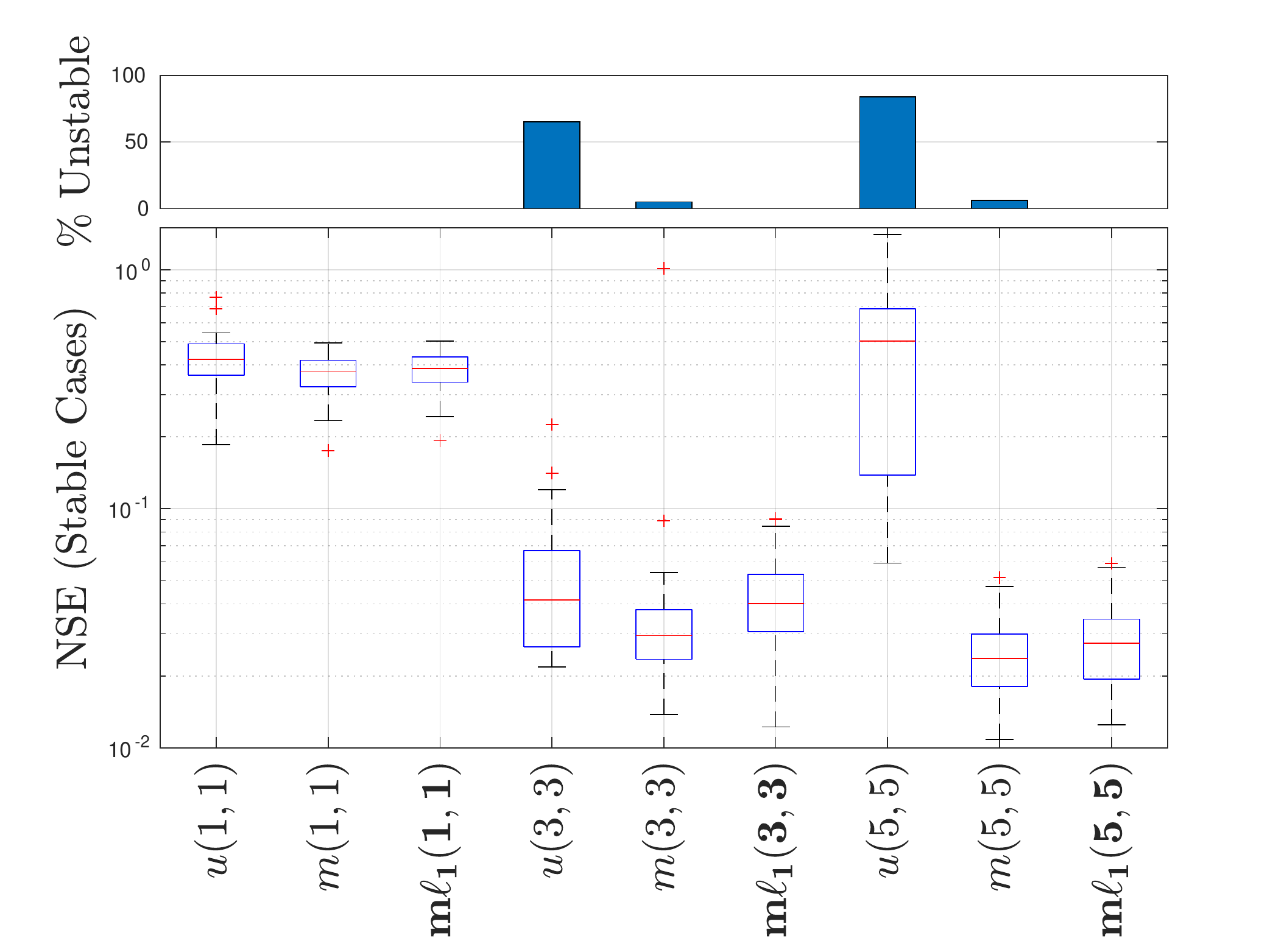}
		}
		
		\subfloat[\label{fig:model constraint validation2 NSE} Validation set 2 ($\sigma_u = 0.3, \mu_u = 0$) over 100 realizations.  ]
		{
			\includegraphics[trim = {0.5cm, 0.0cm, 1cm, 0.0cm}, clip, width =\linewidth]{./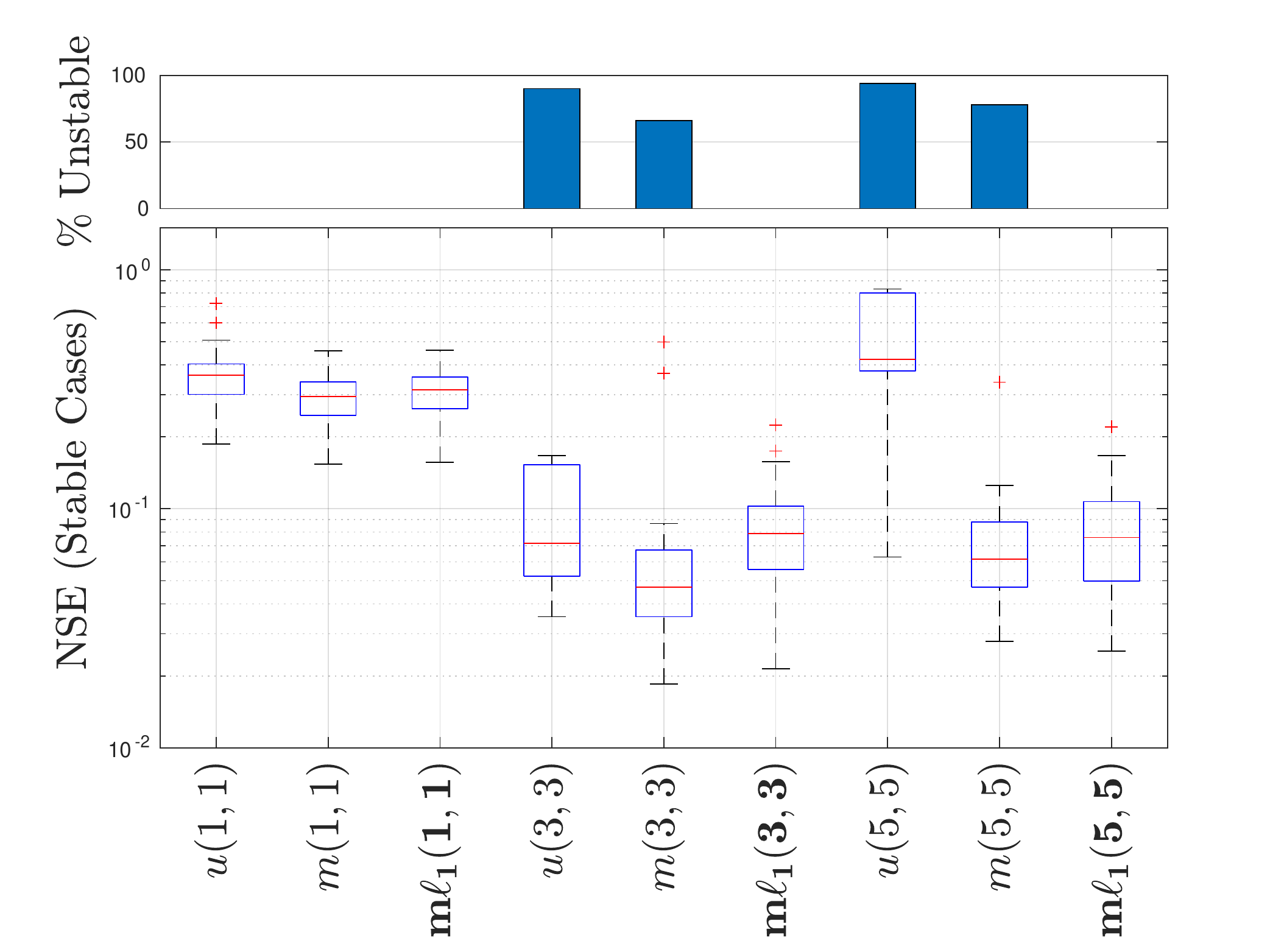}
		}
		\caption{ \label{fig:model_constraints} Box plots showing normalized simulation error for 100 model realizations for different behavioural constraints. The bar graph shows the percentage of models that displayed instability.}
	\end{figure}
	\begin{figure}
		\centering
		\subfloat[ \label{fig:model_comparison_training} Training set ($\sigma_u = 0.2, \mu_u = 0$) over 100 realizations.]
		{
			\includegraphics[trim = {0.5cm, 0cm, 1cm, 0cm}, clip, width =\linewidth]{./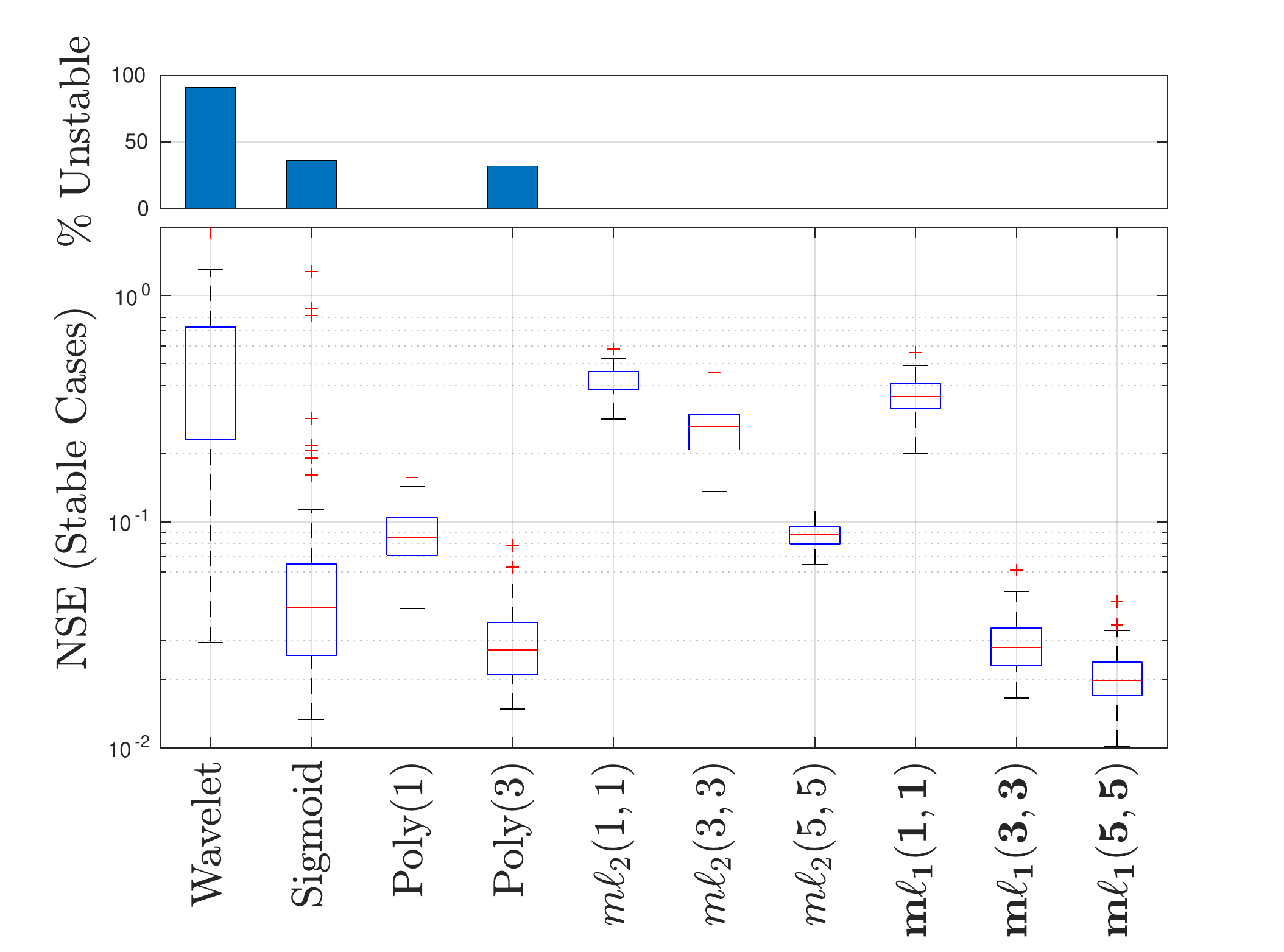}
		}
		
		\subfloat[ \label{fig:model_comparison_val1} Validation set 1 ($\sigma_u = 0.2, \mu_u = 0$) over 100 realizations.]
		{
			\includegraphics[trim = {0.5cm, 0cm, 1cm, 0cm}, clip, width =\linewidth]{./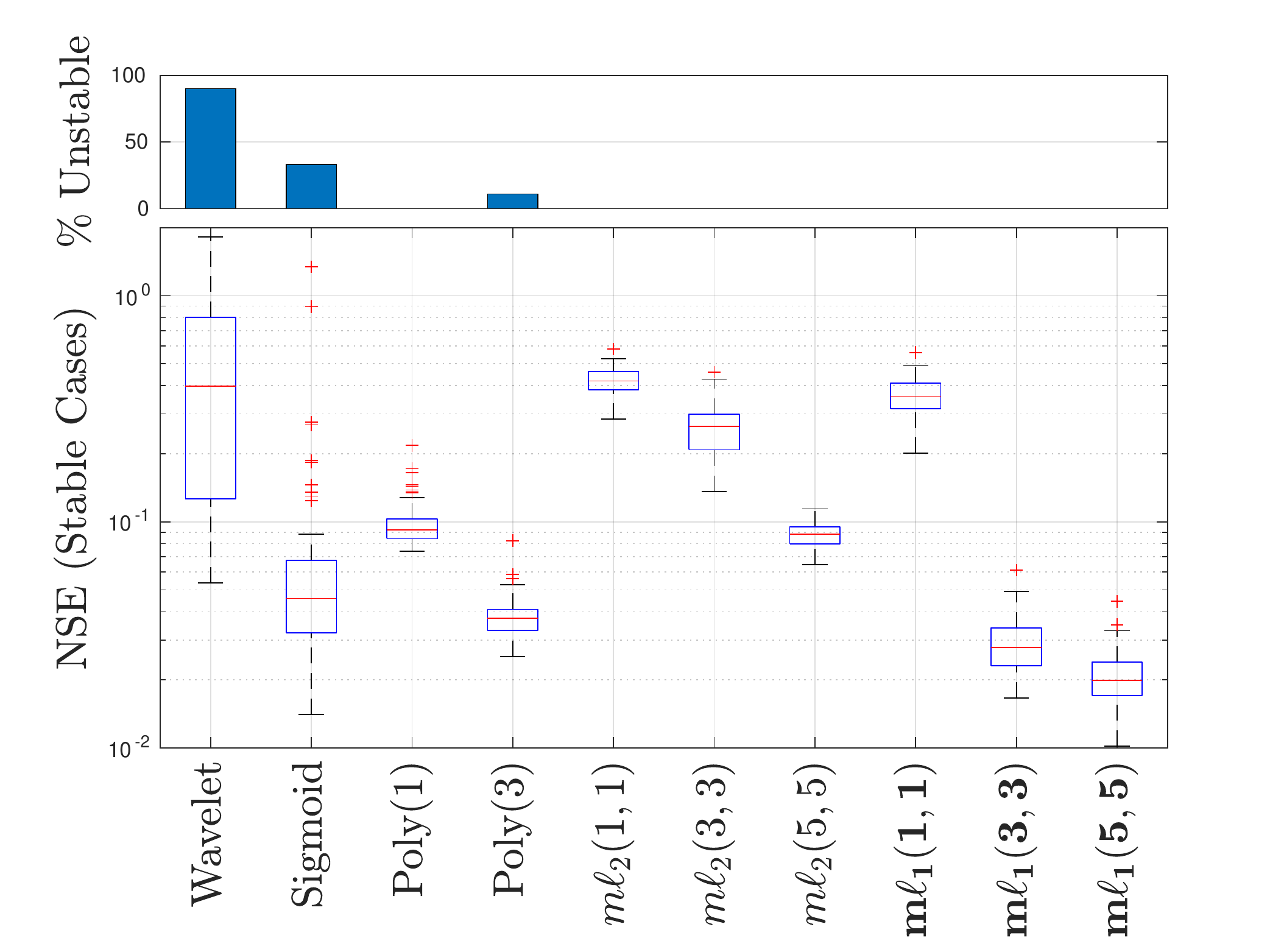}
		}
		
		\subfloat[\label{fig:model_comparison_val2} Validation set 2 ($\sigma_u = 0.3, \mu_u = 0$) over 100 realizations.  ]
		{
			\includegraphics[trim = {0.5cm, 0cm, 1cm, 0cm}, clip, width =\linewidth]{./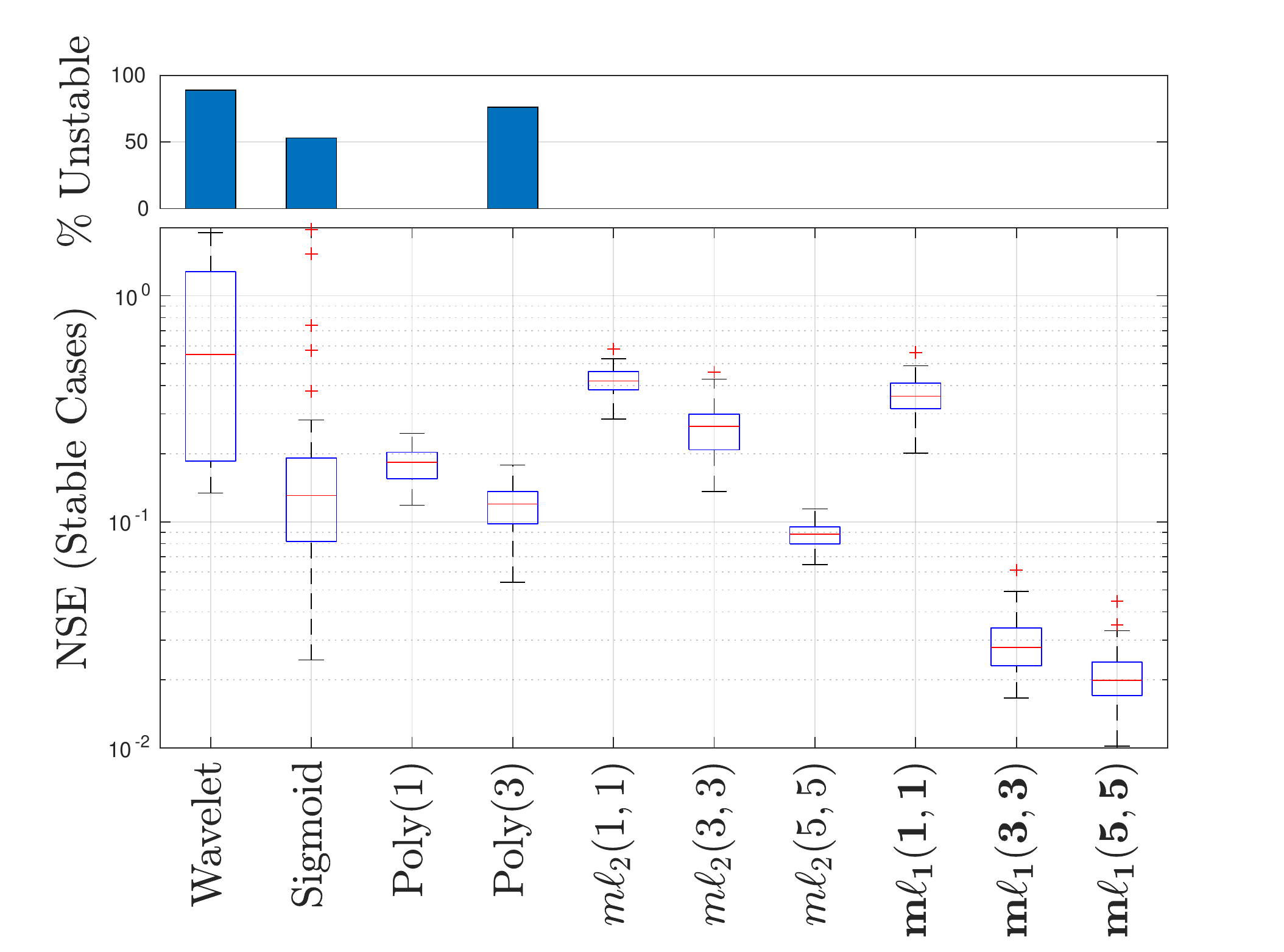}
		}
		\caption{ \label{fig:model_comparison} Box plots showing normalized simulation error for 100 model realizations for different model structures. The bar graph shows the percentage of models that displayed instability.}
	\end{figure}

\tb{	In figures \ref{fig:model_constraints} and \ref{fig:model_comparison}, we have plotted the $NSE$ on both the training set and validation sets 1 and 2 for our proposed model sets, the polynomial model, the NARX models and the model set $m\ell_2$.
	The percentage of total models that displayed instability is indicated in both the bar graph in the upper portion of the figures. }
%	In 
%	Fig. \ref{fig:model_comparison_training}, Fig. \ref{fig:model constraint validation1 NSE} and Fig. \ref{fig:model constraint validation2 NSE}, we compare our proposed method to the polynomial model, the model set $m\ell_2$ and the NARX models generated via Matlab's system identification toolbox.
%	The $NSE$ on both the training set and validation sets 1 and 2 are shown in Fig.  \ref{fig:model constraint training NSE}, Fig. \ref{fig:model constraint validation1 NSE} and Fig. \ref{fig:model constraint validation2 NSE} different combinations of behavioural constraints. 

	%	 and in Table \ref{table: Unstable Models }.
	
	In all cases, the identified linear models performed poorly. This is unsurprising as the true system is highly non-linear. 
	
\tb{	Comparing the models $m\ell_1$ and $m\ell_2$ with the remaining models, we can see that the stability constraints have a regularizing effect where increasing the degree of the polynomials reduces the median NSE; in other words, increasing model complexity improves model fidelity. 
	The other models on the other hand perform worse with increasing the complexity. This is most clearly seen in the models $u$, where increase the polynomial degree results in poorer fits on validation data.}

%	In Fig. \ref{fig:model_constraints} Fig. \ref{fig:model_comparison} the stability-constrained models $ml_1$ and $ml_2$, we observe that increasing the degree of the polynomials reduces the median NSE; in other words, increasing model complexity improves model fidelity. Increasing the complexity the unconstrained models $u$ and $Poly$, on the hand, resulted in poorer fits on validation data. 	This suggests that the constraints are having a regularizing effect.
	
	Our results also suggest that model stability constraints significantly improve robustness.
	Without stability constraints, a model that appears stable during training may turn out to be unstable under a slight shift in the input data distribution. This can be seen most clearly in the models $m(5,5)$ and $Poly(3)$, where on the training data distribution, most models are stable.
%	c.f. Fig. \ref{fig:model constraint training NSE} and Fig. \ref{fig:model constraint validation1 NSE}. 
	However, increasing the variance of the inputs to the network results a large number of unstable models with unbounded NSE, c.f. Fig. \ref{fig:model constraint validation2 NSE} and Fig. \ref{fig:model_comparison_val2}. 
	Further evidence is shown in Table \ref{table: Unstable Models }, where we can see that once the variance of the input data doubles, almost all models that do not have stability constraints are unstable.
	
	\tb{To compare to a standard approach, we also compare to wavelet and sigmoid NARX models fit using the Matlab system identification tool box. The resulting $NSE$ is shown in Fig. \ref{fig:model_comparison} and show the number of models producing unstable models and negative state estimates in tables \ref{table: Unstable Models } and \ref{table: Negative states} respectively. While we observed extremely high performance of the individually identified sub-systems, simulating the network interconnection of those sub-systems produces
	many unstable models, many negative state estimates and poor quality of fit.}

	\begin{table*}[t]
		\centering
		\begin{tabular}{|c|c|c|c|c|c|c|c|c|c|c|c|c|c|c|}
			\hline      &   $\bm{m \ell_1(1,1)}$& $m(1,1)$ & $u(1,1)$ &  $\bm{m \ell_1(3,3)}$& $m(3,3)$ & $u(3,3)$ & $\bm{m \ell_1(5,5)}$& $m(5,5)$ & $u(5,5)$ & Wavelet & Sigmoid\\ \hline
			train. ($\sigma_u = 0.2$) &0\%   & 0\% & 2\%  & 0\%  & 0\%  & 66\% & 0\%  & 0\%  & 89\% & 88\% & 36\% \\ \hline
			val. 1 ($\sigma_u = 0.2$)   & 0\%  & 0\%  & 0\% & 0\%  & 5\%  & 65\% & 0\%  & 6\% &  84\% & 87\% & 31\% \\\hline
			val. 2 ($\sigma_u = 0.3$)   & 0\%  & 0\%  & 0\% & 0\%  & 64\% & 90\% & 0\%  & 78\% & 94\% & 88\% & 51\% \\ \hline
			val. 3 ($\sigma_u = 0.4$)   & 0\%  & 0\%  & 0\% & 0\%  & 88\% & 95\% & 0\%  & 97\% & 89\% & 88\% & 63\% \\ \hline
			
		\end{tabular}
		\caption{\label{table: Unstable Models } Percentage of unstable models that diverged on training and validation data. In each case the input $u$ has $\mu_u = 0$.}
	\end{table*}
	
	\begin{table*}[t]
		\centering
		\begin{tabular}{|c|c|c|c|c|c|c|c|c|c|c|c|}
			\hline          & $\bm{m \ell_1(1,1)}$& $m(1,1)$ & $u(1,1)$& $\bm{m \ell_1(3,3)}$& $m(3,3)$ & $u(3,3)$ & $\bm{m \ell_1(5,5)}$& $m(5,5)$ & $u(5,5)$ & Wavelet & Sigmoid \\ \hline
			train. ($\sigma_u = 0.2$) & 0\%  & 0\%  & 2\% & 0\%  & 0\%  & 66\%  & 0\%  & 0\%  & 89\% & 100\% & 91\% \\ \hline
			val. 1 ($\sigma_u = 0.2$)& 0\%  & 0\%  & 0\% & 0\%  & 0\%  & 65\%  & 0\%  & 0\%  & 84\% & 100 \% & 93\% \\ \hline
			val. 2 ($\sigma_u = 0.3$)& 0\%  & 0\%  & 2\% & 0\%  & 0\%  & 90\%  & 0\%  & 0\%  & 94\%& 100 \% & 99\% \\ \hline
			val. 3 ($\sigma_u = 0.4$)& 0\%  & 0\%  & 1\% & 0\%  & 0\%  & 95\% & 0\%  & 0\%  & 99\%& 100 \% & 100\% \\ \hline
		\end{tabular}
		\caption{\label{table: Negative states} Percentage of total models that predicted negative states. In each case the input $u$ has $\mu_u = 0$. }
	\end{table*}

	For positive linear systems, both $\mc{}$ and $\Theta_{m \ell_2}$ are parameterizations of the same set of models. This is not the case for nonlinear monotone systems and the choice of parametrization impacts the resulting model performance. This can be seen in Fig.  \ref{fig:model_comparison_training}, Fig. \ref{fig:model_comparison_val1} and Fig. \ref{fig:model_comparison_val2} where the models fit using our proposed $\ell_1$ contraction constraint outperform those fit using the previously-proposed $\ell_2$ contraction constraint.
	
	Finally, looking at Table \ref{table: Negative states}, we can see that when models were not constrained to be positive $u$ and $Poly$, a large number of models producing negative state estimates were identified. This can lead to non-sensical results in many applications, and prevents the application of synthesis methods that depend on monotonicity.
	
	\subsubsection{Scalability Comparison of $\ell_1$ and $\ell_2$ contraction\label{sec:l1 versus l2 traffic}}
	We now explore the scalability of the $\ell_1$ and $\ell_2$ contraction constraints for nonlinear models.
	
	We construct traffic networks consisting of $N = P + 2M$ nodes by placing $P$ points randomly in a unit square and triangulating. $M$ in nodes and $M$ out nodes are then randomly assigned throughout the network.
	We generate training data using the method described in Section \ref{sec:results} with $T = 600, \mu_u = 0, \sigma_u = 0.4$ and a corresponding validation set.
	We then fit models $m \ell_1(3,3)$ and $m \ell_2(3,3)$ using an interior point method. This is repeated 5 times for a varying number of nodes.
	
	%	\begin{figure}
	%		\includegraphics[trim = {4cm, 9cm, 5cm, 9.5cm}, clip, width = 0.9\linewidth]{./code_cleaned/data_cleaned/nonlinear-effects_of_constraints/val1_L1_versus_L2.pdf}
	%		\caption{\label{fig:L1 versus L2 val1} Scatter plot showing NSE for model sets $\mc{}$ and $\elltwo{}$ on validation set 1 ($\sigma_u = 0.2, \mu_u = 0$).}
	%	\end{figure}

	Figure \ref{fig:runtimes} shows a plot of the time taken to solve each problem versus the total number of nodes. 
	We observe that fitting models with an $ \ell_2$ contraction constraint has a complexity $\mathcal{O}[N^3]$ in the number of nodes while models using the $ \ell_1$ contraction constraint have a complexity of $\mathcal{O}[N^{1.5}]$ in the number of nodes.
	The improved complexity of the $ \ell_1$ constraint is a result of its separable structure. 
	
	The validation NSE versus the number of agents is shown in Fig. \ref{fig:NSE_vs_agents} for the model set in $m\ell_1(3,3)$. We observe no deterioration of model performance as the number of agents increases, suggesting that our method can be effective when scaled to large networks.
	
	%	Furthermore, we can study the performance of the attained models. This is shown in Fig. \ref{fig:validation err vs agents} as a scatter plot of the computation time versus NSE on the validation set. We can see that in each instance, the models $m\ell_1(3,3)$ both outperform the models $m\ell_2(3,3)$ and are faster to fit.

	\begin{figure}[]
		\centering
		\includegraphics[trim = {0cm, 0cm, 0.75cm, 0cm}, clip,width = 0.9\linewidth]{./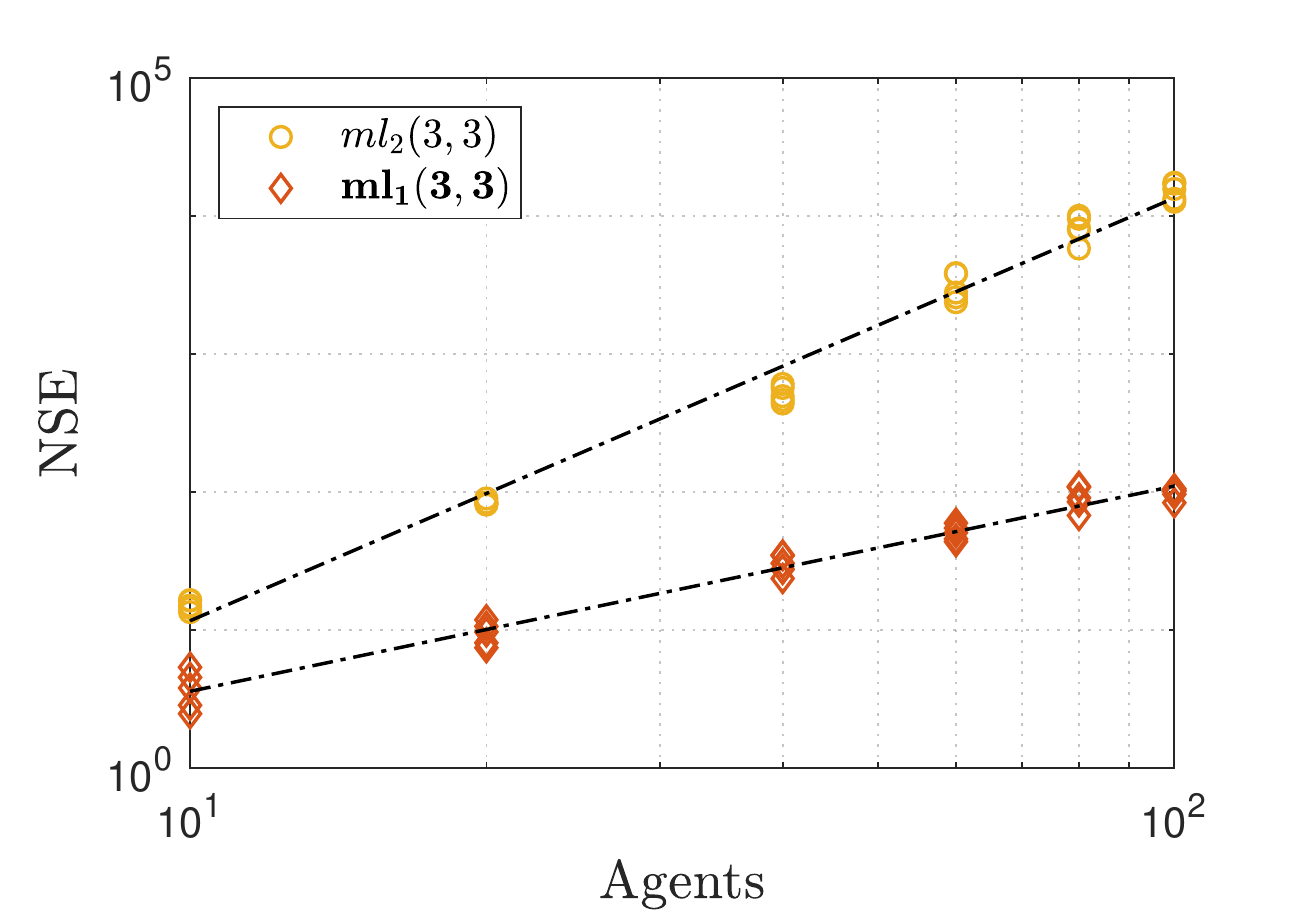}
		\caption{\label{fig:runtimes} Computation time for models $m\ell_2(3,3)$ and $m\ell_1(3,3)$ for a varying system size. The slopes of the lines are $3.06$ and $1.49$ respectively.}
	\end{figure}
	
		\begin{figure}[]
		\centering
		\includegraphics[trim = {0cm, 0cm, 0.75cm, 0cm}, clip,width = 0.9\linewidth]{./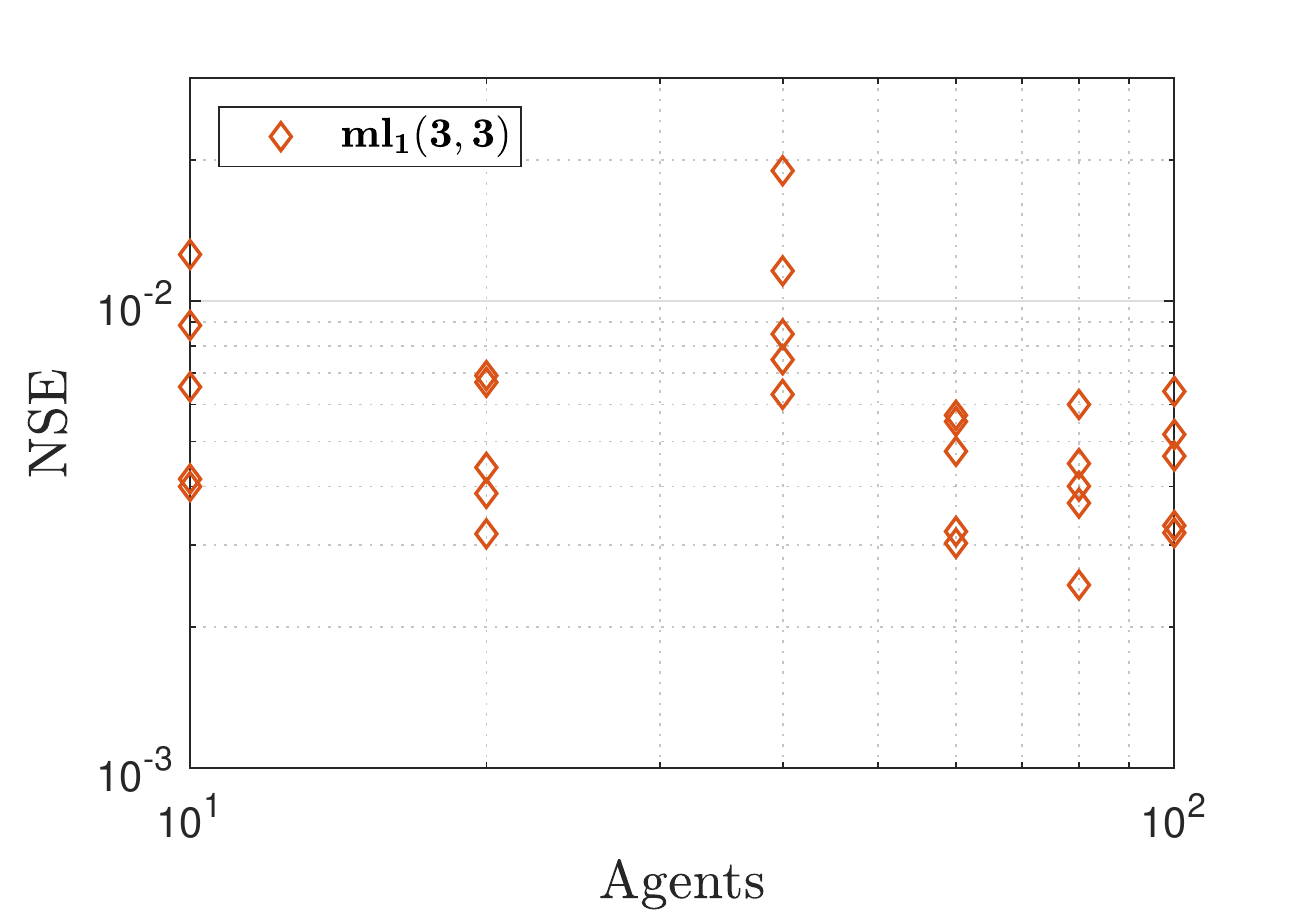}
		\caption{\label{fig:NSE_vs_agents} NSE for models $m\ell_1(3,3)$ for varying system size.}
	\end{figure}
	
	%	\begin{figure}[]
	%		\centering
	%		\includegraphics[trim = {0cm, 0cm, 0.5cm, 0.25cm}, clip,width = 0.9\linewidth]{./code_cleaned/data_cleaned/runtime_and_NSE_V_agents/runtime_versus_NSE.pdf}
	%		\caption{\label{fig:validation err vs agents}
	%			Comparison of computation time and performance (NSE on validation) for models $m\ell_2(3,3)$ and $m\ell_1(3,3)$ for a varying number of agents (same as in Fig.  \ref{fig:runtimes}).}
	%	\end{figure}

	\subsubsection{Scalability Compared to Interior Point Methods}
	
	We conclude our numerical experiments with a comparison of the computational complexity of the proposed distributed algorithm  to centralized optimization via standard interior point methods. 
%	We first study the convergence characteristics of ADMM for a simple linear system. We then turn our attention to our ongoing case study: identification of a traffic network. Specifically, we focus on the quality of the solution and scalability of the non-linear algorithm.
	
	We introduce additional notation to distinguish between the centralized and distributed algorithms. We will use a subscript $C$ to refer to models fit using the off-the-shelf interior point method. The subscript $D$ is used to denote models fit using ADMM. For example, $m\ell_1(3,3)_D$ is the problem of fitting the model $m\ell_1(3,3)$ solved using the distributed algorithm.
	
%	\eqref{eq:Distributed LREE} and distributed  \eqref{ADMM iterates 1.1}, \eqref{ADMM iterates 1.2}, \eqref{ADMM iterates 1.3} algorithms respectively. 

%	Finally, we study the scalability of ADMM applied to the identification of traffic networks with varying size.
	
	To control for the number of neighbors of each node, we generate random, connected, regular graphs of size $N$ and degree $4$ and randomly assign $\frac{P}{2}$ \textit{in} nodes and $\frac{P}{2}$ \textit{out} nodes. Training data is generated according to Section \ref{sec:traffic simulation} with $T=500$ and $\sigma_u = 0.2$. 
	
	We then solve the problems $m\ell_1(3,3)_C$ and $m\ell_1(3,3)_D$ using the stopping criteria from 	\cite[Section 3.3]{boyd2011distributed} ($\epsilon_{abs} = 10^{-4}$, $\epsilon_{rel} = 10^{-3}$).
	
	The results are displayed in Fig. \ref{fig:runtime_threads_sim}. The line $m\ell_1(3,3)_{D- serial}$ indicates the total time taken to fit a model using ADMM, where the sub-problems \eqref{ADMM iterates 2.1}, \eqref{ADMM iterates 2.2} are solved without parallelization (consecutively, on a single computer). 
	Additionally, we calculate the total time that would be taken if the computation had been distributed among $N$ nodes, indicated by the line $m\ell_1(3,3)_{D - parallel}$.  
	
	While the program $m\ell_1(3,3)_{D- serial}$ takes longer on the selected problems than $m\ell_1(3,3)_{C}$, it has superior scalability with $\mathcal{O}[N^{1.05}]$ compared to $\mathcal{O}[N^{1.36}]$, suggesting that for a larger number of nodes, it will be faster.
	
	Of more interest is $m\ell_1(3,3)_{D - parallel}$ with an observed complexity of $\mathcal{O}[N^{0.05}]$ in the number of nodes. This suggests that if the computation is distributed, the problem can be solved in near constant time. 
	It is important to note, however, that this does not take into account many of the complexities of distributed computing, for example the overhead associated with communication between nodes.

	\begin{figure}[]
		\centering
		\includegraphics[trim = {0cm 0cm 0.5cm 0.5cm}, clip,width = \linewidth]{./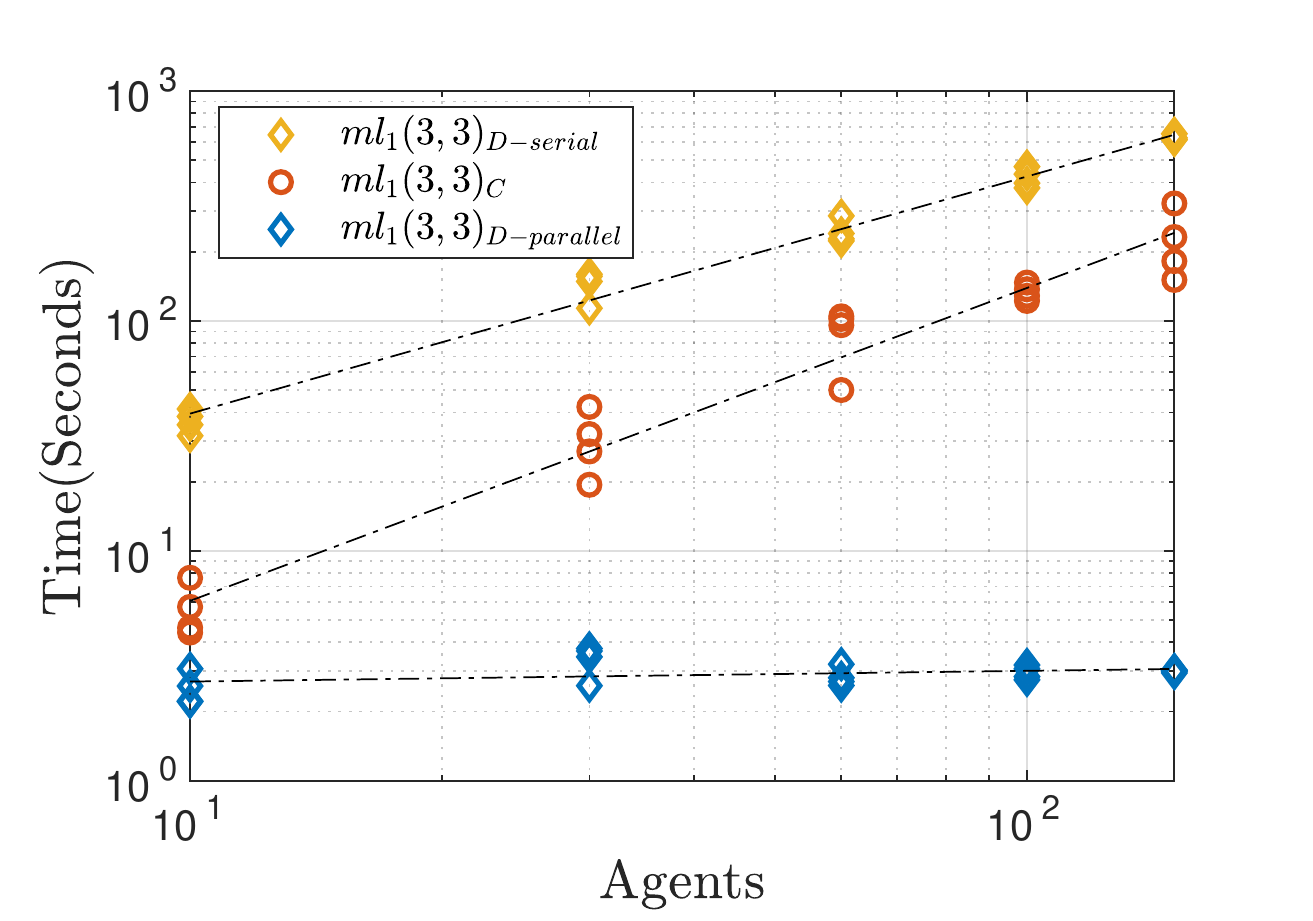}
		\caption{\label{fig:runtime_threads_sim} Runtime of ADMM compared to IPM where the number of threads is one or equal to the number of nodes. When calculating the results for "simulated" distributed computing, ADMM is run in series and time per iteration is taken to be the sum of the maximum times to solve each step. The slopes of the lines are $1.05$, $1.36$ and $0.047$ respectively.}
	\end{figure}

	\section{Conclusion}
	In this paper we have proposed a model set for system identification that allows model behavioural guarantees such as stability (contraction), monotonicity, and positivity. Furthermore, we have introduced a particular separable structure that allows distributed identification and scalability to large networked systems via local node-to-node communication.
	
	We have examined the proposed approach via a selection of numerical case studies including a nonlinear traffic network. The main conclusions are that the approach scales much better than previous approaches guaranteeing stability, and that behavioural constraints such as stability and monotonicity can have a regularising effect that leads to superior model predictions.

	\appendix 
	
	\subsection{Theorem 1} \label{appendix:thm1}
	We use the following lemma  in the proof of Theorem \ref{thm: L1 contracting models}:
	
	\begin{lemma} \label{cor:ell p stability}
		Suppose that for the system \eqref{eq:implicit model}, there exists a weighted differential $ \ell_1$ storage function $V_t = |E(x_t, u_t) \delta_t|_1$, where $E:\mathbb{R}^{n}\times \mathbb{R}^m \rightarrow \mathbb{M}^{n}$  such that $V_{t+1}\leq  \alpha V_t$ and there exists some $K\succ0$  such that $|\delta_t|_1 \prec K |E_t\delta_t|_1$,
		then the system is contracting in the sense of definition \ref{def:Incremental Lp Stability}. 
	\end{lemma}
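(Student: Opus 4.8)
The plan is to lift the infinitesimal (differential) contraction estimate to a finite incremental bound by integrating along a path of initial conditions; this is the standard route from a differential storage function to incremental stability. First I would iterate the supplied decay inequality $V_{t+1} \le \alpha V_t$ along any solution of the differential dynamics \eqref{eq:differential dynamics} to obtain $V_t \le \alpha^t V_0$, i.e. $|E_t\delta_t|_1 \le \alpha^t |E_0\delta_0|_1$. Combining this with the coercivity hypothesis $|\delta_t|_1 \le K|E_t\delta_t|_1$ yields the pointwise differential bound $|\delta_t|_1 \le K\alpha^t|E_0\delta_0|_1$, which controls the decay of any infinitesimal displacement propagated by the variational system.

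The key step is then to connect the two initial conditions $x_0^a$ and $x_0^b$ by a smooth path and flow it forward. I would take the straight line $\gamma_0(s) = x_0^b + s(x_0^a - x_0^b)$, $s \in [0,1]$, and let $\gamma_t(s)$ denote the trajectory of \eqref{eq:implicit model} initialized at $\gamma_0(s)$ under the common input; well-posedness guarantees each $\gamma_t(\cdot)$ is well defined, with $\gamma_t(0) = x_t^b$ and $\gamma_t(1) = x_t^a$. Writing $\delta_t(s) = \partial_s\gamma_t(s)$, this variation satisfies the differential dynamics with $\delta_0(s) = x_0^a - x_0^b$, so the bound above applies uniformly in $s$ (crucially, the contraction condition holds for \emph{all} $(x,u)$, hence along every intermediate trajectory, not merely the two endpoints). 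The triangle inequality for integrals then gives
\[
|x_t^a - x_t^b|_1 = \left| \int_0^1 \delta_t(s)\, ds \right|_1 \le \int_0^1 |\delta_t(s)|_1\, ds \le K\alpha^t \int_0^1 |E(\gamma_0(s),u_0)(x_0^a - x_0^b)|_1\, ds.
\]

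I would then set $b_1(x_0^a,x_0^b) = K\int_0^1 |E(\gamma_0(s),u_0)(x_0^a - x_0^b)|_1\, ds$, which is continuous in its arguments because $E$ is continuous, and strictly positive whenever $x_0^a \neq x_0^b$ since $E \in \mMatrix^n$ is nonsingular so the integrand vanishes only on a null set; a slight inflation of $b_1$ converts the resulting $\le$ into the strict inequality required by Definition \ref{def:Incremental Lp Stability}, establishing contraction with $p=1$ (which suffices, as the definition asks only for \emph{some} $p$). The main obstacle is the lifting step: one must verify that the differential inequality $V_{t+1}\le\alpha V_t$ genuinely propagates along the entire one-parameter family $\gamma_t(s)$ and that differentiation under the integral sign is legitimate, which rests on well-posedness and smoothness of $e,f$; the remaining estimates are routine applications of the iterated decay bound and the integral triangle inequality.
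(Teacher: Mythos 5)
Your proof is correct and follows essentially the same route as the paper's: both parametrize a straight-line homotopy of initial conditions, propagate the variation through the differential dynamics, iterate the decay $V_{t+1}\le\alpha V_t$ to get $|E_t\delta_t|_1\le\alpha^t|E_0\delta_0|_1$, invoke the coercivity bound $|\delta_t|_1\le K|E_t\delta_t|_1$, and conclude via the integral triangle inequality with $b=K\int_0^1|E_0\delta(0)|_1\,ds$. Your additional remarks on the continuity and strict positivity of $b_1$ and on the strictness of the inequality are details the paper leaves implicit, but they do not change the argument.
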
	
	\begin{proof}
		Consider the family of solutions to \eqref{eq:implicit model}, parametrized by $\rho \in [0,1]$, having initial conditions $\rho x_1(0) + (1-\rho)x_2(0)$ and input $u(t)$, denoted $x_\rho(t)$.
		
		Define $\delta_\rho(t) = \parDiff{x_\rho(t)}{\rho}$. Now, consider:
		
		\begin{align*}
		|x_1(t) - x_2(t)|_1 &= \left|\int_{0}^{1} \delta_\rho(t) d\rho\right|_1 \\
		&\leq \int_0^1 |\delta_\rho(t)|_1 d\rho \\
		&\leq \int_0^1 K |E_{t} \delta_\rho(t)|_1 d\rho
		\end{align*}
		By assumption, $V_{t+1} \leq \alpha V_t$ which means that $|E_{t}\delta(t)|_1 \leq \alpha|E_{t-1}\delta_{t-1}|_1$. This inequality can be applied repeatedly to give: 
		\begin{equation*}
		|x_1(t) - x_2(t)|_1  \leq  K\alpha^t \int_0^1|E_0\delta_\rho(0)|_1 d \rho
		\end{equation*}
		Taking $b(x_1(0), x_2(0)) = K\int_0^1 |E_0\delta_\rho(0)|_1 d \rho$ gives Definition \ref{def:Incremental Lp Stability}.
	\end{proof}

	\begin{proof}[Proof of Theorem \ref{thm: L1 contracting models}]		
		First we will show well-posedness and monotonicity. We will then prove stability of monotone contracting systems and finally just contracting systems. For brevity of the equations, we will use a subscript $t$ to refer to the evaluation of a function at a specific time, so $E_t = E(x_t, u_t)$.

		\textit{Well-posedness:}  Assume \eqref{eq:well-posed}. Since $E$ is a non-singular M matrix, there exists a diagonal matrix $D$ such that $ED + DE^\top \succ 0$. Well posedness follows from the same argument as \cite[Theorem 5]{tobenkin2017convex}.
		
		\textit{Monotonicity:} Assume \eqref{eq:monotonicity}. Since $E$ as an M-matrix, it is inverse positive and $E^{-1}F \geq 0$. The differential dynamics of the explicit system \eqref{eq:explicit system} can be written as $\delta_{x_{t+1}} = E_{t+1}^{-1} F \delta_{x_{t}}$. Therefore, the explicit system is monotone.

		\textit{Contraction:}	Assume conditions \eqref{eq:slack_condition} and \eqref{eq:l1_contraction_polytone}. Condition \eqref{eq:slack_condition} implies that 
		\begin{equation}
		|F(x,u)| \leq S(x,u).
		\end{equation}
		Condition \eqref{eq:l1_contraction_polytone} then implies,
		\begin{gather}
		\bm{1}^\top (\alpha E(x,u) -  S(x,u)) \geq 0, \\
		\implies \bm{1}^\top (\alpha E(x,u) -  |F(x,u)|) \geq 0, \\
		\implies \bm{1}^\top (\alpha -  |F(x,u)|E^{-1}(x,u)) \geq 0, \\
		\implies \bm{1}^\top (\alpha -  |F(x,u)E^{-1}(x,u)|) \geq 0, \\
		\implies(\alpha -  ||F(x,u)E^{-1}(x,u)||_1) \geq 0,
		\end{gather}
		where $||\cdot||_1$ is the induced matrix norm , $||M|| := \max_j \sum_{i}{M^{ij}}$.
		Stability follows from the same argument as in the proof of Theorem \ref{thm: L1 contracting models}. 
		Multiply by $|E_t \delta_{t}|_1$, we get:
		\begin{gather}
		\left( \alpha -  ||F(x,u)E^{-1}(x,u)||_1) \right)|E_t \delta_{t}|_1  \geq 0, \\
		\implies \alpha|E_t \delta_{t}|_1 -  |F(x,u)E^{-1}_t E_t \delta_{t}|_1) \geq 0, \\
		\implies |F_t \delta_t|_1 - \alpha|E_t \delta_{t}|_1 \leq 0, \\
		\implies |E_{t+1} \delta_{t+1}|_1 - \alpha|E_t \delta_{t}|_1 \leq 0 .
		\end{gather}
		Contraction then follows from Lemma 1 with contraction metric $V_t = |E(x_t, u_t) \delta_t|_1$.
		\textit{Monotonicity and Contraction}
		Finally, to see how contraction follows from \eqref{eq:monotonicity} and \eqref{eq:l1 contraction condition}, note that they imply conditions \eqref{eq:slack_condition} and \eqref{eq:l1_contraction_polytone}.
		
%		We first show that \eqref{eq:l1 contraction condition} guarantees $\ell_1$-contraction.
%		Assume \eqref{eq:l1 contraction condition} and let $V_t = |E(x_t, u_t) \delta_t|_1$. Then,
%		\begin{align}
%		V_{t+1} - (1-\mu)V_t & = |E_{t+1}\delta_{t+1}|_1 - (1-\mu)|E_t\delta_t |_1 \label{eq:thm1 proof 1}\\
%		& = |F_t \delta_t|_1 - (1-\mu)|E_t \delta_{t}|_1 \nonumber\\
%		& = |F_t E_t^{-1}E_t \delta_t|_1 - (1-\mu)|E_t \delta_{t}|_1  \nonumber\\
%		& \leq (||F_t E_t^{-1}||_1 - 1+\mu) |E_t \delta_{t}|_1 \nonumber
%		\end{align}
%		where $||\cdot||_1$ is the induced matrix norm. Note that $F_tE_t^{-1}\geq 0$, so we can evaluate the induced matrix norm as:
%		
%		\begin{equation} \label{eq:thm1 proof 2}
%		||F_t E_t^{-1}||_1 - 1+\mu = \max_j \left( \bm{1}' FE^{-1} \right)^{j} - 1 + \mu,
%		\end{equation}
%		where $a^j$ refers to the $jth$ component of the vector $a$.
%		Now, by assumption, 
%		$$
%		\bm{1}^\top ((1-\mu)E -  F) \geq 0 \implies ((1-\mu)\bm{1}^\top -  \bm{1}^\top FE^{-1})^j \leq0,
%		$$
%		for $j = 1,...,n$. So \eqref{eq:l1 contraction condition} means that $\eqref{eq:thm1 proof 2} \leq0$ which means that $\eqref{eq:thm1 proof 1} \leq0$. Corollary \ref{cor:ell p stability} then implies $\ell_1$ contraction.
%		
	\end{proof}

	\subsection{Proof of Theorem \ref{thm:admm_is_separable}} \label{appendix:admm_is_separable}
	\begin{proof}
		The first step \eqref{ADMM iterates 1.1} can be broken up into the following sum:
		$$
		\theta(k+1) = \arg \min_{\theta} \sum_{i=1}^N \hat{J}^i_{ee}(\upstream{\theta}) + \frac{\rho}{2}||\upstream{\theta} - \upstream{\phi}(k)  + \upstream{u}(k)||^2,
		$$
		which is equivalent to the $N$ optimization problems in \eqref{ADMM iterates 2.1}. The second step \eqref{ADMM iterates 1.2} can be written as 
		\begin{equation}		
		\phi(k+1) = \arg \min_\phi \mathcal{I}_{\mc{}}(\phi) + \sum_{i=1}^N  \frac{\rho}{2}||\downstream{\theta}({k+1}) - \downstream{\phi}  + \downstream{u}(k)||^2.
		\end{equation}
		We will show that the indicator function can be written as a sum over $i = 1,...,N$ indicator functions each depending on $\downstream{\phi}$.	
		Splitting it up in terms of the individual constraints, we get
		\begin{multline} \label{eq:theta_l1_constraint}
		\mathcal{I}_{\mc{}}(\phi) = \mathcal{I}_{F_x \geq 0 }(\phi) + \mathcal{I}_{F_u \geq 0 }(\phi) + \mathcal{I}_{E \in \mathbb{M} }(\phi) + \\\mathcal{I}_{ \bm{1}^\top ( \alpha E - F \geq 0) }(\phi).
		\end{multline}
		The first two terms can be written as element-wise SOS constraints.
		The last two terms can then be written as a sum over the columns of the matrices $E$ and $F$. We can therefore right \eqref{eq:theta_l1_constraint} as:
		$$
		\mathcal{I}_{\mc{}}(\phi) = \sum_i \mathcal{I}_{\downstream{\phi} \in \Theta^i_{m \ell_1}}(\downstream{\phi})
		$$
		where,
		\begin{multline*}
		\mathcal{I}_{\downstream{\phi} \in \Theta^i_{m \ell_1}}(\downstream{\phi}) =
		\mathcal{I}_{ \alpha  E^{ii}  - \sum_{k\in \downstream{\Nodes}} F^{ki} \geq 0 }(\downstream{\phi}) + \\ 
		\mathcal{I}_{ E^{ii} + {E^{ii}}^\top  > \epsilon  }(\downstream{\phi}) + \\
		\sum_{k\in \downstream{\Nodes}} \mathcal{I}_{E^{ki} \geq 0}(\downstream{\phi}) + 	\sum_{k\in \downstream{\Nodes}} \mathcal{I}_{F^{ki} \geq 0}(\downstream{\phi}).
		\end{multline*}
	\end{proof}

	\subsection{Theorem \ref{thm:discrete_time_positive_linear}} \label{appendix:thm2}
	Sufficiency follows from Theorem 1.
	
	We now prove necessity, i.e. that if a positive linear system is Schur stable, then $\theta \in \Theta_{m \ell_1}$.
	Suppose a matrix $A$ is Schur stable. Then by \cite[proposition 2]{rantzer2011distributed},
	there exists some $z> 0$ such that $z^\top A - z^\top  < 0 $. We can always rescale $z$ such that $z^\top A - z^\top  \leq -\epsilon \bm{1}$. With this $z$, we choose $E = diag(z) \geq 0$ and $F = EA \geq 0$. Then 
	$$
	z^\top A - z^\top  \leq -\epsilon \bm{1}^\top  \implies \bm{1}^\top (F - E) \leq -\epsilon \bm{1}^\top  \implies \theta \in \Theta_{m \ell_1}
	$$

	%\begin{IEEEbiography}{Michael Shell}
	%
	%	\begin{IEEEbiography}[\includegraphics[width=1in]{}]{Jack Umenberger}
	%	\end{IEEEbiography}
	
	\bibliographystyle{ieeetr}
	\bibliography{Refs}

\end{document}